\renewcommand{\pi}{ \Pi }
\newcommand{\remove}[1]{}
\newcommand{\Draft}[1]{\ifdefined\IsDraft\texttt{ #1} \fi}
\newcommand{\TLLNCS}[2]{\ifdefined\IsLLNCS#1\else #2 \fi}
\newcommand{\authnote}[2]{{\bf [{\color{red} #1's Note:} {\color{blue} #2}]}}
\newcommand{\authnote}[2]{}
\newcommand{\changed}[1]{{\color{blue} #1}}
\newcommand{\changed}[1]{#1}
\newcommand{\deleted}[1]{{\color{blue} ~Deleted:~{\color{red} #1}}}
\newcommand{\deleted}[1]{}
\newcommand{\sdotfill}{\textcolor[rgb]{0.8,0.8,0.8}{\dotfill}} 
\newenvironment{protocol}{\begin{proto}}{\vspace{-\topsep}\sdotfill\end{proto}}
\newenvironment{algorithm}{\begin{algo}}{\vspace{-\topsep}\sdotfill\end{algo}}
\newcommand{\Ensuremath}[1]{\ensuremath{#1}\xspace}
\newcommand{\MathAlg}[1]{\mathsf{#1}}
\newcommand{\MathAlgX}[1]{\Ensuremath{\MathAlg{#1}}}
\newcommand{\ie}  {i.e.,\xspace}
\newcommand{\eg}  {e.g.,\xspace}
\newcommand{\wrt} {with respect to\xspace}
\newcommand{\wlg} {without loss of generality\xspace}
\newcommand{\ceil}[1]{\left\lceil #1 \right\rceil}
\newcommand{\set}[1]{\ens{#1}}
\newcommand{\ith}{\Ensuremath{i^{\rm th}}}
\newcommand{\R}{{\mathbb R}}
\newcommand{\N}{{\mathbb{N}}}
\newcommand{\zo}{\{0,1\}}
\newcommand{\eps}{\varepsilon}
\newcommand{\la}{\gets}
\newcommand{\poly}{\operatorname{poly}}
\newcommand{\polylog}{\operatorname{polylog}}
\newcommand{\negl}{\operatorname{neg}}
\newcommand{\Supp}{\operatorname{Supp}}
\renewcommand{\cref}{\Cref}
	\newaliascnt{claiml}{theorem}
	\newtheorem{claiml}[claiml]{Claim}
	\renewenvironment{claim}{\begin{claiml}}{\end{claiml}}
	\newtheorem{theorem}{Theorem}[section]
	\newaliascnt{lemma}{theorem}
	\newaliascnt{claim}{theorem}
	\newtheorem{claim}[claim]{Claim}
	\newaliascnt{corollary}{theorem}
	\newaliascnt{proposition}{theorem}
	\newaliascnt{conjecture}{theorem}
	\newaliascnt{definition}{theorem}
	\newtheorem{definition}[definition]{Definition}
	\newaliascnt{remark}{theorem}
	\newtheorem{remark}[remark]{Remark}
	\newaliascnt{example}{theorem}
\crefname{lemma}{Lemma}{Lemmas}
\crefname{figure}{Figure}{Figures}
\crefname{claim}{Claim}{Claims}
\crefname{corollary}{Corollary}{Corollaries}
\crefname{proposition}{Proposition}{Propositions}
\crefname{conjecture}{Conjecture}{Conjectures}
\crefname{definition}{Definition}{Definitions}
\crefname{remark}{Remark}{Remarks}
\crefname{exmaple}{Example}{Examples}
\newaliascnt{construction}{theorem}
\crefname{construction}{Construction}{Constructions}
\newaliascnt{fact}{theorem}
\crefname{fact}{Fact}{Facts}
\newaliascnt{notation}{theorem}
\crefname{notation}{Notation}{Notation}
\crefname{equation}{Equation}{Equations}
\newaliascnt{proto}{theorem}
\newtheorem{proto}[proto]{Protocol}
\crefname{proto}{protocol}{protocols}
\newaliascnt{algo}{theorem}
\newtheorem{algo}[algo]{Algorithm}
\crefname{algo}{algorithm}{algorithms}
\newaliascnt{expr}{theorem}
\newtheorem{expr}[expr]{Experiment}
\crefname{experiment}{experiment}{experiments}
\def\FullBox{$\Box$}
\def\qed{\ifmmode\qquad\FullBox\else{\unskip\nobreak\hfil
		\penalty50\hskip1em\null\nobreak\hfil\FullBox
		\parfillskip=0pt\finalhyphendemerits=0\endgraf}\fi}
\def\qedsketch{\ifmmode\Box\else{\unskip\nobreak\hfil
		\penalty50\hskip1em\null\nobreak\hfil$\Box$
		\parfillskip=0pt\finalhyphendemerits=0\endgraf}\fi}
\newcommand{\ex}[1]{\Ex\left[#1\right]}
\newcommand{\Ex}{{\mathbf E}}
\renewcommand{\Pr}{{\mathrm {Pr}}}
\newcommand{\pr}[1]{\Pr\left[#1\right]}
\newcommand{\Ac}{\MathAlgX{A}}
\newcommand{\Gc}{\MathAlgX{G}}
\newcommand{\As}{\MathAlgX{\Ac^\ast}}
\newcommand{\Dc}{\MathAlgX{D}}
\newcommand{\Fc}{\MathAlgX{F}}
\newcommand{\Pc}{\mathsf{P}}
\newcommand{\Bc}{\mathsf{B}}
\newcommand{\Cc}{\mathsf{C}}
\newcommand{\ens}[1]{\left\{#1\right\}}
\newcommand{\size}[1]{\left|#1\right|}
\newcommand{\cindist}{\mathbin{\stackrel{\rm c}{\approx}}}
\newcommand{\Uni}{{\mathord{\mathcal{U}}}}
\newcommand{\prob}[1]{\mathsf{\textsc{#1}}}
\newcommand{\SD}{\prob{SD}}
\newcommand{\I}{\mathcal{I}}
\newcommand{\ppt}{{\sc ppt}\xspace}
\newcommand{\pptm}{{\sc pptm}\xspace}
\newcommand{\cD}{\mathcal{D}}
\newcommand{\st}{\text{ s.t.\ }}
\newcommand{\D}{{{\mathsf{D}}}}
\newcommand{\cs}{{\cal{S}}}
\theoremstyle{definition}
\theoremstyle{plain}
\theoremstyle{remark}
\newcommand{\supp}{\mathrm{supp}}
\newcommand{\wt}[1]{\widetilde{#1}}
\newcommand{\wb}[1]{\overline{#1}}
\newcommand{\Tableofcontents}{
	\ifdefined\IsLLNCS \else
	\thispagestyle{empty}
	\pagenumbering{gobble}
	\clearpage
	\ifdefined\IsSubmission \else
	\setcounter{tocdepth}{2}
	\tableofcontents
	\thispagestyle{empty}
	\clearpage
	\fi
	\pagenumbering{arabic}
	\fi
}
\renewcommand{\Cc}{\mathsf{D}}
\renewcommand{\I}{\mathcal{K}}
\newcommand{\tpprot}[2]{\left( #1,#2\right)}
\newcommand{\tpi}{\wt{\pi}}
\newcommand{\tAc}{\wt{\Ac}}
\newcommand{\tBc}{\wt{\Bc}}
\newcommand{\np}{n}
\newcommand{\di}{\mathsf{D}}
\newcommand{\wh}[1]{\widehat{#1}}
\newcommand{\hPi}{\widehat{\pi}}
\newcommand{\hAc}{\widehat{\Ac}}
\newcommand{\hBc}{\widehat{\Bc}}
\newcommand{\Dcr}{\MathAlgX{Dcr}}
\newcommand{\hDcr}{\Dcr}
\newcommand{\tPi}{\widetilde{\pi}}
\newcommand{\htPi}{\widehat{\pi}}
\newcommand{\Ec}{\MathAlgX{Eve}}
\newcommand{\secParam}{\kappa}
\newcommand{\pk}{\secParam}
\newcommand{\Po}{\overline{\Pc}}
\newcommand{\Inote}[1]{\authnote{Iftach}{#1}}
\newcommand{\Nnote}[1]{\authnote{Nikos}{#1}}
\title{On the Complexity of Fair Coin Flipping\thanks{A preliminary version of the work appeared in~\cite{HMO18}.}\Draft{\\{\small \sc Working Draft: Please Do Not Distribute}}}
 \author{Iftach Haitner\thanks{School of Computer Science, Tel Aviv University. E-mail: \texttt{iftachh@cs.tau.ac.il}. Member of the  Check Point Institute for Information Security.} %
 	\footnote{Research supported by ERC starting grant 638121.}
 	\and Nikolaos Makriyannis\thanks{School of Computer Science, Tel Aviv University. E-mail: \texttt{n.makriyannis@gmail.com}.}~\footnotemark[3]
 	\and Eran Omri\thanks{Department of Computer Science, Ariel University. E-mail: \texttt{omrier@ariel.ac.il}. Research supported by ISF grant 152/17, and by the Ariel Cyber Innovation Center in conjunction with the Israel National Cyber directorate in the Prime Minister's Office.}
 }
\begin{document}
	\sloppy
	  \maketitle

\begin{abstract}
A two-party coin-flipping protocol is $\eps$-fair if no efficient adversary can bias the output of the honest party (who always outputs a bit, even if the other party aborts) by more than $\eps$. \citeauthor{Cleve86} [STOC '86] showed that $r$-round $o(1/r)$-fair coin-flipping protocols do not exist.  \citeauthor*{AwerbuchBCGM1985} [Manuscript '85] constructed a $\Theta(1/\sqrt{r})$-fair coin-flipping protocol, assuming the existence of one-way functions. \citeauthor*{MoranNS16} [Journal of Cryptology '16] constructed an $r$-round coin-flipping protocol that is $\Theta(1/r)$-fair (thus matching the aforementioned lower bound of \citeauthor{Cleve86} [STOC '86]), assuming the existence of oblivious transfer.

The above gives rise to the intriguing question of whether oblivious transfer, or more generally ``public-key primitives,'' is required for an $o(1/\sqrt r)$-fair coin flipping protocol.  Towards answering this intriguing question, \citet*{MW20} [Crypto '18] have recently showed that in the \textit{random oracle model (ROM)}, any coin-flipping protocol can be biased by $\Omega(1/\sqrt{r})$. This  implies that $o(1/\sqrt r)$-fair coin-flipping protocol cannot be constructed from one-way function, or from a family of collision-resistant hash functions, in a \textit{black-box} way.  This result does not rule out, however, non black-box constructions, and black-box constructions based on primitives that cannot be realized in the ROM.

We make a different progress towards answering  above question by showing that, for any constant $r\in \N$, the existence of an $1/(c\cdot \sqrt{r})$-fair, $r$-round coin-flipping protocol implies the existence of an infinitely-often key-agreement protocol, where $c$ denotes some universal constant (independent of $r$). 
Our reduction is \emph{non} black-box and makes a novel use of the recent dichotomy for two-party protocols of \citeauthor*{HNOSS18New} to facilitate a two-party variant of the recent attack of \citeauthor*{BHMO18New} on multi-party coin-flipping protocols. 
\end{abstract}

\Tableofcontents

\section{Introduction}\label{sec:intro}
In a two-party coin flipping protocol, introduced by \citet{Blum83}, the parties wish to output a common (close to) uniform bit, even though one of the parties may be corrupted and try to bias the output. Slightly more formally, an $\eps$-fair coin flipping protocol should satisfy the following two properties: first, when both parties behave honestly (\ie follow the prescribed protocol), they both output the \emph{same} uniform bit. Second, in the presence of a corrupted party that may deviate from the protocol arbitrarily, the distribution of the honest party's output may deviate from the uniform  distribution (unbiased bit) by at most $\eps$. We emphasize that the above notion requires an honest party to \emph{always} output a bit, regardless of what the corrupted party does, and, in particular, it is not allowed to abort if a cheat is detected.\footnote{Such protocols are typically addressed as having \emph{guaranteed output delivery}, or, abusing terminology, as \emph{fair}.}~Coin flipping is a fundamental primitive with numerous applications, and thus lower bounds on coin flipping protocols yield analogous bounds for many basic cryptographic primitives, including other inputless primitives and secure computation of functions that take input (\eg XOR). 

 In his seminal work, \citet{Cleve86} showed that, for \emph{any} efficient two-party $r$-round coin flipping protocol, there exists an efficient adversarial strategy that biases the output of the honest party by $\Theta(1/r)$. The above lower bound on coin flipping protocols was met for the two-party case by \citet*{MoranNS16} improving over the $\Theta(\np/\sqrt{r})$-fairness achieved by the majority protocol of \citet*{AwerbuchBCGM1985}. The protocol of \cite{MoranNS16}, however, uses oblivious transfer; to be compared with the protocol of \cite{AwerbuchBCGM1985} that can be based on any one-way function. An intriguing open question is whether oblivious transfer, or more generally ``public-key primitives,'' is required for an $o(1/\sqrt r)$-fair coin flip. The question was partially answered in the black-box setting by \citet*{Dachman11} and \citet*{DachmanMM14}, who showed that \emph{restricted} types of fully black-box reductions cannot establish $o(1/\sqrt r)$-bias coin flipping protocols from one-way functions. In particular, for constant-round coin flipping protocols, \cite{Dachman11} yields that black-box techniques from one-way functions can only guarantee fairness of order $1/\sqrt{r}$.

\subsection{Our Results}
Our main result is that constant-round coin flipping protocols with better bias compared to the majority protocol of \cite{AwerbuchBCGM1985} imply the existence of infinitely-often key-agreement. We recall that infinitely-often key-agreement protocols satisfy correctness (parties agree on a common bit with overwhelming probability), and, for an infinite number of security parameters, no efficient eavesdropper can deduce the output with probability noticeably far from a random guess.\footnote{While infinitely-often key-agreement protocols are useless from a cryptographic point of view \changed{(as they do not guarantee security for every security parameter), constructing such protocols appears to be as hard as obtaining full-fledged key agreement protocols.}} 
\begin{theorem}[Main result, informal]\label{thm:Infmain}
	For any (constant) $r\in \N$, the existence of an $1/(c\cdot \sqrt r)$-fair, $r$-round coin flipping protocol implies the existence an infinitely-often key-agreement protocol, for $c>0$ being a universal constant (independent of $r$).
\end{theorem}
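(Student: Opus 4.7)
I would prove the contrapositive: assuming no infinitely-often key-agreement exists, I would show that every $r$-round coin-flipping protocol $\pi$ can be biased by strictly more than $1/(c\sqrt{r})$. Let $X \in \{0,1\}$ denote the common output bit and let $T_i$ denote the public transcript through round $i$. Consider the conditional-expectation martingale $X_i = \Ex[X \mid T_i]$, with $X_0 = 1/2$ and $X_r = X$; its total quadratic variation is $\Theta(1)$, so some round must contribute an average squared jump of order $\Omega(1/r)$---the familiar $\sqrt{r}$-bottleneck that every martingale-based attack tries to exploit.

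The attack template is that of \citeauthor*{BHMO18New} for multi-party protocols: at each round, the corrupt party computes $\Ex[X \mid \text{own view, extended by the upcoming message}]$ and aborts whenever aborting would push the honest party's default output toward the opposite of the attacker's target bit. In the two-party regime this cannot be executed as-is, because the corrupt party sees only one half of the private state and therefore cannot evaluate the required conditional expectations directly. This is exactly the gap that the two-party dichotomy of \citeauthor*{HNOSS18New} is used to close, in a non-black-box manner. Applying that dichotomy to the transcript-functional $X$, one obtains: either (a) each party admits an efficient estimator $\hat X_i$ whose average squared error against $\Ex[X \mid \text{own view at round } i]$ is $o(1/r)$ for every round $i$, or (b) one can extract from $\pi$ an infinitely-often key-agreement. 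In case (a), plugging $\hat X_i$ into the BHMO template and assigning the role of attacker to whichever party ``owns'' the heavier-jump rounds yields an abort strategy biasing the honest output by $\Omega(1/\sqrt{r})$; taking $c$ to be a sufficiently large universal constant makes this exceed $1/(c\sqrt{r})$ and contradicts the assumed fairness. Hence case (b) must hold, producing the desired io-KA---the ``infinitely often'' qualifier being inherited directly from the dichotomy's own guarantee.

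\textbf{Main obstacle.} The most delicate step is the quantitative two-party transplant of the BHMO attack when one has access only to the average-case estimator afforded by \citeauthor*{HNOSS18New}. One must control how the estimator's $o(1/\sqrt{r})$ per-round approximation error propagates across all $r$ rounds, show that at least one party's cumulative expected gain from the abort strategy is still $\Omega(1/\sqrt{r})$ rather than being swamped by accumulated noise, and handle the subtlety that the honest party is required to output a bit after an abort (so the attacker's advantage is measured against the honest party's default post-abort behavior, not against a protocol-halt event). Tightening these estimates to land at the correct $\Omega(1/\sqrt{r})$ scale is precisely what pins down the value of $c$.
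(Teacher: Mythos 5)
Your high-level template---contrapositive, expected-outcome martingale, BHMO-style abort attack, invoking \citeauthor*{HNOSS18New}'s dichotomy in a non-black-box way---matches the paper's. But the role you assign to the dichotomy is not the one the paper uses, and the gap you've missed is precisely the paper's main technical contribution.

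Two specific issues. First, the martingale. You take $X_i = \Ex[X \mid T_i]$, conditioning on the transcript. The obstacle to the attack is not that a corrupt party ``sees only one half of the private state''---$\Ex[X \mid T_i]$ is a function of the public transcript alone---but that this quantity is not efficiently computable (e.g.\ when the transcript contains commitments). The paper's fix is to condition the Doob martingale on the \emph{forecaster's outputs} $F_{\le i} = \Fc(M_{\le 1};S),\ldots,\Fc(M_{\le i};S)$, which have constant-size support for a constant-round protocol, so $X_i = \Ex[C \mid F_{\le i}, S]$ can be approximated by sampling (\cref{claim:computingGV}). Crucially, the forecaster exists \emph{unconditionally} (\cref{thm:Forcasters}); it is not the conditional half of the dichotomy.

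Second, and more importantly, you attribute to the dichotomy the job of supplying an efficient estimator with small squared error. What the dichotomy (assuming no io-KA) is actually needed for is the analogue of \citeauthor{CleveI93}'s \emph{independence} property, \cref{eq:CI:Ind}: in the inefficient attack, $\Ex[Z_i^{\Bc} \mid M_{\le i}] = \Ex[Z_i^{\Bc} \mid M_{\le i+1}]$ holds for free because the parties' coins are independent given the transcript. Once you condition on $F_{\le i}$ rather than $M_{\le i}$, this property fails in general; indeed under OT it \emph{must} fail for some protocols, since otherwise the attack would break \cite{MoranNS16}. The paper's key step (\cref{claim:AttackOpurunity}) shows that if io-KA does not exist, then the attacker's abort trigger $E_{i+1}$ and the honest party's backup value $Z_i^{\Po}$ are approximately independent given $F_{\le i}$; this is proved by constructing an auxiliary protocol $\hPi$ whose two outputs are $E_{i+1}$ and $Z_i^{\Po}$ and applying the decorrelator of \cref{thm:Decor}. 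Your proposal never surfaces this independence requirement, and the error-propagation concerns you list in ``Main obstacle,'' while real, are not where the dichotomy actually enters---the attack could be perfectly accurate and still fail to bias anything if aborting were correlated with the honest backup value. That correlation is the genuine obstruction, and showing it cannot arise absent io-KA is the heart of the proof.
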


As in \cite{Cleve86,Dachman11,DachmanMM14}, our result extends via a simple reduction to general multi-party coin flipping protocols (with more than two-parties) without an honest majority. 
Our non black-box  reduction  makes a novel use of the recent dichotomy for two-party protocols of \citet{HNOSS18New}. Specifically, assuming that io-key-agreement does not exist and applying \citeauthor{HNOSS18New}'s dichotomy, we show that a two-party variant of the recent multi-party attack of \citet{BHMO18New} yields a $ 1/(c\cdot \sqrt{r}) $-bias attack.

 \subsection{Our Technique}\label{sec:intro:Technique}
 Let $\pi =(\Ac,\Bc)$ be a $r$-round two-party coin flipping protocol. We show that the nonexistence of key-agreement protocols yields an efficient $\Theta(1/\sqrt r)$-bias attack on $\pi$. We start by describing the $1/\sqrt{r}$-bias \emph{inefficient} attack of \citet{CleveI93}, and the approach of \citet{BHMO18New} towards making this attack efficient. We then explain how to use the recent results by \citet{HNOSS18New} to obtain an efficient attack (assuming the nonexistence of io-key-agreement protocols).

\subsubsection{\citeauthor{CleveI93}'s Inefficient Attack} \label{sec:intro:CI}
We describe the inefficient $1/\sqrt{r}$-bias attack due to \citet{CleveI93}. Let $M_1,\ldots,M_r$ denote the messages in a random execution of $\pi$, and let $C$ denote the (\wlg) always common output of the parties in a random honest execution of $\pi$. Let $X_i = \ex{C \mid M_{\le i} }$. Namely, $M_{\le i} = M_1,\ldots,M_i$ denotes the partial transcript of $\pi$ up to and including round $i$, and $X_i$ is the expected outcome of the parties in $\pi$ given $M_{\le i}$. It is easy to see that $X_0,\ldots,X_r$ is a martingale sequence: $\ex{X_i \mid X_{0},\ldots, X_{i-1}} = X_{i-1}$ for every $i$. Since the parties in an honest execution of $\pi$ output a uniform bit, it holds that $X_0 = \pr{C = 1} = 1/2$ and $X_r\in \zo$. \citet{CleveI93} (see \citet{BHMO18New} for an alternative simpler proof) prove that, for such a sequence (omitting absolute values and constant factors),
\begin{align}\label{eq:CI:Jump}
&\mbox{Gap:}&\hfill \pr{\exists i\in [r] \colon X_i- X_{i-1} \ge 1/\sqrt{r}} \ge 1/2
\end{align}

 Let the \ith \emph{backup value} of party $\Pc$, denoted $Z_i^\Pc$, be the output of party $\Pc$ if the other party aborts prematurely \emph{after} the \ith message was sent (recall that the honest party must always output a bit, by definition). In particular, $Z^\Pc_r$ denotes the final output of $\Pc$ (if no abort occurred).  We claim that \wlg for both $\Pc\in \set{\Ac, \Bc}$ it holds that
 \begin{align}\label{eq:CI:BU}
 &\mbox{Backup values approximate outcome:}& \pr{\exists i\in [r] \colon \size{X_{i} - \ex{Z_{i}^\Pc \mid M_{\le i} }} \ge 1/2\sqrt{r}} \le 1/4
 \end{align}
  To see why, assume \cref{eq:CI:BU} does not hold.  Then,  the (possibly inefficient) adversary controlling $\wb{\Pc}\in \set{\Ac,\Bc}\setminus \Pc$  that aborts at the end of round $i$ if $(-1)^{1-z}\cdot (X_{i} - \ex{Z_{i}^\Pc \mid M_{\le i} }) \ge 1/\sqrt{r}$,  for suitable   $z\in \zo$,  biases the output of $\Pc$ towards $1-z$ by $\Theta(1/\sqrt r)$.
  
   Finally, since the coins of the parties are \emph{independent} conditioned on the transcript (a fundamental fact about protocols), if party $\Ac$ sends the $(i+1)$ message then
 \begin{align}\label{eq:CI:Ind}
 &\mbox{Independence:}& \ex{Z_{i}^\Bc \mid M_{\le i} } = \ex{Z_{i}^\Bc \mid M_{ \le i+1} }
 \end{align}
 
 Combining the above observations yields that \wlg:
\begin{align}\label{eq:CIGapZ}
\pr{\exists i\in [r] \colon \text{$\Ac$ sends the \ith message} \land X_{i} - \ex{Z_{i-1}^\Bc \mid M_{\le i} } \ge 1/2\sqrt{r}} \ge 1/8
\end{align}
\cref{eq:CIGapZ} yields the following (possibly inefficient) attack for a corrupted party $\Ac$ biasing $\Bc$'s output towards zero: before sending the \ith message $M_i$, party $\Ac$ aborts if $ X_{i} - \ex{Z_{i-1}^\Bc \mid M_{\le i} } \ge 1/2\sqrt{r}$. By \cref{eq:CIGapZ}, this attack biases $\Bc$'s output towards zero by $\Omega(1/2\sqrt{r})$.

The clear limitation of the above attack is that, assuming one-way functions exist, the value of $X_i=\ex{C \mid M_{\le i} = (m_1,\ldots, m_i) }$ and of $\ex{Z_{i}^\Pc \mid M_{\le i} = (m_1,\ldots, m_i) } $ might \emph{not} be efficiently computable as a function of $t$.\footnote{For instance, the first two messages might contain commitments to the parties' randomness.} Facing this difficulty, \citet{BHMO18New} considered the martingale sequence $X_i = \ex{C \mid Z^\Pc_{\le i} }$ (recall that $Z_i^\Pc$ is the \ith backup value of $\Pc$). It follows that, for constant-round protocols, the value of $X_i$ is only a function of a constant size string, and thus it is efficiently computable (\cite{BHMO18New} have facilitated this approach for protocols of super-constant round complexity, see \cref{fn:intro:1}). The price of using the alternative sequence $X_1,\ldots,X_r$ is that the independence property (\cref{eq:CI:Ind}) might no longer hold. Yet, \cite{BHMO18New} manage to facilitate the above approach into an efficient $\wt{\Omega}(1/\sqrt r)$-attack on \emph{multi-party} protocols. In the following, we show how to use the dichotomy of \citet{HNOSS18New} to facilitate a two-party variant of the attack from \cite{BHMO18New}.

 \subsubsection{Nonexistence of Key-Agreement Implies an Efficient Attack}

 Let $U_p$ denote the Bernoulli random variable taking the value $1$ with probability $p$, and let $P \cindist_\rho Q$ stand for $Q$ and $P$ are $\rho$-computationally indistinguishablity (\ie an efficient distinguisher cannot tell $P$ from $Q$ with advantage better than $\rho$). We are using two results by \citet{HNOSS18New}. The first one given below holds for any two-party protocol. 
 
 \begin{theorem}[\citet{HNOSS18New}'s forecaster, informal]\label{thm:intro:ForcasterInf}
 	Let $\Delta = \tpprot{\Ac}{\Bc}$ be a single-bit output (each party outputs a bit) two-party protocol. Then, for any constant $\rho>0$, there exists a constant output-length poly-time algorithm (\textit{forecaster}) \Fc mapping transcripts of $\Delta$ into (the binary description of) pairs in $[0,1] \times [0,1]$ such that the following holds: let $(X,Y,T)$ be the parties outputs and transcript in a random execution of $\Delta$\,, then
 	
 	\begin{itemize}
 		\item 	$(X,T) \cindist_\rho (U_{p^{\Ac}},T)_{(p^{\Ac},\cdot) \la \Fc(T)}$, and 
 		
 		\item $(Y,T) \cindist_\rho (U_{p^{\Bc}},T)_{(\cdot, p^{\Bc}) \la \Fc(T)}$.
 	\end{itemize}
 
 \end{theorem}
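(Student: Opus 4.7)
The plan is to prove \cref{thm:intro:ForcasterInf} via a boosting-style iterative construction, in the spirit of Impagliazzo's hard-core lemma and the Trevisan--Tulsiani--Vadhan dense-model theorem. Since each party outputs a single bit, it suffices to construct, separately for each $\Pc \in \set{\Ac,\Bc}$ (with output bit $W\in\set{X,Y}$), a predictor $\Fc_\Pc$ mapping transcripts to $[0,1]$ such that $(W,T) \cindist_\rho (U_{\Fc_\Pc(T)},T)$; the full forecaster is then $\Fc(T)=(\Fc_\Ac(T),\Fc_\Bc(T))$. Restricting the output to a multiple of $\rho/4$ gives a constant-size range, losing only $\rho/4$ in advantage and handling the constant output-length requirement.

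The construction proceeds by iteration: set $f_0 \equiv 1/2$, and at each step, if $(W,T)$ and $(U_{f_i(T)},T)$ are not $\rho$-indistinguishable, let $D_i$ be a poly-size distinguisher witnessing this. Using that $W \mid T$ and $U_{f_i(T)} \mid T$ are Bernoulli with parameters $f^\ast(T):=\Pr[W=1\mid T]$ and $f_i(T)$ respectively, the distinguishing advantage can be rewritten (after possibly negating $D_i$) as
\[
\Ex_T\Bigl[\bigl(f^\ast(T) - f_i(T)\bigr)\cdot\bigl(D_i(1,T) - D_i(0,T)\bigr)\Bigr] > \rho.
\]
I would then update $f_{i+1}(T) := \Pi_{[0,1]}\bigl(f_i(T) + \rho\cdot(D_i(1,T) - D_i(0,T))\bigr)$, where $\Pi_{[0,1]}$ denotes projection onto $[0,1]$.

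The number of iterations is controlled by the potential $\Phi(f) := \Ex_T[(f(T) - f^\ast(T))^2] \in [0,1]$. A direct expansion, combined with the fact that projection onto $[0,1]$ only decreases $\Phi$ (since $f^\ast \in [0,1]$) and that $|D_i(1,T) - D_i(0,T)| \le 1$, yields $\Phi(f_{i+1}) \le \Phi(f_i) - 2\rho^2 + \rho^2 = \Phi(f_i) - \rho^2$, so the process terminates in at most $1/\rho^2 = O_\rho(1)$ steps. The final $f_i$ is a constant-size arithmetic combination of poly-size circuits, hence poly-time; rounding its output to the $\rho/4$-grid yields $\Fc_\Pc$.

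The main obstacle will be making the ``linearization'' step rigorous: one must symmetrize $D_i$ and choose the correct sign to convert ``$D_i$ distinguishes with advantage $>\rho$'' into a signed correlation with $f^\ast - f_i$ of magnitude $>\rho$, and one must argue that such a $D_i$ indeed exists whenever $f_i$ fails (which follows from the indistinguishability being with respect to the class of poly-size distinguishers). Once this is in place, the potential decrease is a standard calculation, and the symmetrization and final rounding overheads introduce only lower-order losses that are absorbed by running the construction with parameter $\rho/4$ in place of $\rho$.
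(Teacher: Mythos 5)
This theorem is not proved in the paper at all: it is imported verbatim (as \cref{thm:Forcasters}) from \citet[Thm.~3.8]{HNOSS18New}, and the only local work the paper does is the footnote explaining that the \emph{constant output-length} property follows by applying the cited result with parameter $\rho/2$ and truncating each coordinate to $\lceil \log 1/\rho\rceil + 1$ bits. Your $\rho/4$-grid rounding is exactly this observation. So you are proving something the paper chose to cite, and the right comparison is against the cited source rather than against anything in this document.

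That said, your proposed boosting/potential argument is the natural route to such a ``forecaster'' statement, and the core calculations you outline are sound. Treating each party separately is correct, since the two bullets are separate indistinguishability claims. The linearization is fine: writing $f^\ast(t)=\Pr[W=1\mid T=t]$ and $\Delta_i(t)=D_i(1,t)-D_i(0,t)$ (with $D_i(b,t)$ interpreted as the acceptance probability of a possibly randomized distinguisher), one has
\[
\Pr[D_i(W,T)=1]-\Pr[D_i(U_{f_i(T)},T)=1]=\Ex_T\bigl[(f^\ast(T)-f_i(T))\,\Delta_i(T)\bigr],
\]
so a distinguisher of advantage $>\rho$ gives (after a sign flip) exactly the correlation inequality you want. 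The potential decrement $\Phi(f_{i+1})\le\Phi(f_i)-2\rho^2+\rho^2$ with projection onto $[0,1]$ only helping is likewise correct, terminating after $\le 1/\rho^2$ rounds. One small point you gloss over: for randomized $D_i$ the update needs $D_i(0,t)$ and $D_i(1,t)$, i.e.\ acceptance probabilities, which the forecaster must estimate by sampling; this introduces a further $O(\rho)$ error that must be absorbed, another instance of the ``run with $\rho/c$'' slack you already invoke.

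The genuine gap is the \emph{uniformity} over security parameters, which is exactly where the hard work in \citet{HNOSS18New} lives and which your sketch does not address. The indistinguishability in \cref{sec:prelim:notation} is with respect to \pptm distinguishers and an asymptotic quantifier over $\kappa$; the formal \cref{thm:Forcasters} accordingly produces a single \ppt forecaster $\Fc$ together with an \emph{infinite set} $\I\subseteq\N$ on which it works. Your boosting step ``if $f_i$ fails, take a distinguisher $D_i$'' implicitly fixes a $\kappa$ (or works with a nonuniform circuit per $\kappa$). In the uniform setting, ``$f_i$ fails'' only yields a \pptm $D_i$ that achieves advantage $>\rho$ for \emph{infinitely many} $\kappa$, not for all of them; on the complementary $\kappa$'s the update $f_i\mapsto f_{i+1}$ can \emph{increase} $\Phi$ by up to $\rho^2$. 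Consequently the per-$\kappa$ potential argument does not terminate as stated, and one cannot simply read off a single uniform $\Fc$ and a single infinite $\I$ from the iteration. Fixing this requires bookkeeping over the nested infinite index sets on which each $D_i$ actually helps (or a different, nonuniform-to-uniform translation), and you should flag that as the real obstacle, rather than the linearization, which is routine.
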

Namely, given the transcript, $\Fc$ forecasts the output-distribution for each party in a way that is computationally indistinguishable from (the distribution of) the real output. 


Consider the $(r+1)$-round protocol $\tPi = (\tAc,\tBc)$, defined by $\tAc$ sending a random $i\in [r]$ to $\tBc$ as the first message and then the parties interact in a random execution of $\pi$ for the first $i$ rounds. At the end of the execution, the parties output their \ith backup values $z_i^{\Ac}$ and $z_i^{\Bc}$ and halt. Let $\Fc$ be the forecaster for $\tPi$ guaranteed by \cref{thm:intro:ForcasterInf} for $\rho= 1/r^2$ (note that $\rho$ is indeed constant). A simple averaging argument yields that
\begin{align} \label{eq:indistipi}
(Z_i^{\Pc},M_{\le i}) \cindist_{1/r} (U_{p^{\Pc}},M_{\le i})_{(p^{\Ac},p^{\Bc}) \la \Fc(M_{\le i})}
\end{align}
for both $\Pc \in \set{\Ac,\Bc}$ and every $i\in[r]$, letting $\Fc(m_{\le i})= \Fc(i,m_{\le i})$. Namely, $\Fc$ is a good forecaster for the partial transcripts of $\pi$.

 Let $M_1,\ldots,M_r$ denote the messages in a random execution of $\pi$ and let $C$ denote the output of the parties in $\pi$. Let $F_i = \tpprot{F_i^\Ac}{F_i^\Bc}=\Fc(M_{\le i})$ and let $X_i = \ex{C \mid F_{\le i} }$. It is easy to see that $X_1,\ldots,X_r$ is a martingale sequence and that $X_0 = 1/2$. We assume \wlg that the last message of $\pi$ contains the common output. Thus, it follows from \cref{eq:indistipi} that $F_r\approx (C,C) \in \set{(0,0),(1,1)}$ (otherwise, it will be very easy to distinguish the forecasted outputs from the real ones, given $M_r$). 
Hence, similarly to \cref{sec:intro:CI}, it holds that
 \begin{align}\label{eq:Dec:Jump}
&\mbox{Gap:}&\hfill \pr{\exists i\in [r] \colon X_i- X_{i-1} \ge 1/\sqrt{r}} \ge 1/2
\end{align}
 Since $F_i$ has constant-size support and since $\pi$ is constant round, it follows that $X_i$ is efficiently computable from $M_{\le i}$.\footnote{\label{fn:intro:1}In the spirit of \citet{BHMO18New}, we could have modified the definition of the $X_i$'s to make them efficiently computable even for non constant-round protocols. The idea is to define $X_i= \ex{C \mid F_i,X_{i-1}}$. While the resulting sequence might not be a martingale, \cite{BHMO18New} proves that a $1/\sqrt{r}$-gap also occurs with constant probability \wrt such a sequence. Unfortunately, we cannot benefit from this improvement, since the results of \citet{HNOSS18New} only guarantees indistinguishablity for constant $\rho$, which makes it useful only for attacking constant-round protocols.}
 
Let $Z_i^\Pc$ denote the backup value computed by party $\Pc$ in round $i$ of a random execution of $\pi$. The indistinguishablity of \Fc yields that $\ex{Z_{i}^\Pc \mid F_{\le i} } \approx F_i^\Pc$. Similarly to \cref{sec:intro:CI}, unless there is a simple $1/\sqrt{r}$-attack, it holds that 
 \begin{align}\label{eq:Dec:BU}
 &\mbox{Backup values approximate outcome:}& \pr{\exists i\in [r] \colon \size{X_{i} - \ex{Z_{i}^\Pc \mid F_{\le i} }} \ge 1/2\sqrt{r}} \le 1/4
 \end{align}
 Thus, for an efficient variant of \cite{CleveI93}'s attack, it suffices to show that 
  \begin{align}\label{eq:Dec:Ind}
 &\mbox{Independence:}& \ex{Z_{i}^\Pc \mid F_{ \le i}} \approx \ex{Z_{i}^\Pc \mid F_{ \le i+1} }
 \end{align}
 for every $\Pc \in \set{\Ac,\Bc}$ and round $i$ in which party $\Po\in \set{\Ac,\Bc}\setminus \set{\Pc}$ sends the $(i+1)$ message. However, unlike \cref{eq:CI:Ind} in \cref{sec:intro:CI}, \cref{eq:Dec:Ind} does not hold unconditionally (in fact, assuming oblivious transfer exists, the implied attack must fail for some protocols, yielding that \cref{eq:Dec:Ind} is false for these protocols). Rather, we relate \cref{eq:Dec:Ind} to the existence of a key-agreement protocol. Specifically, we show that if \cref{eq:Dec:Ind} is not true, then there exists a key-agreement protocol. 

\paragraph{Proving that $F_{i+1}$ and $Z_{i}^\Pc$ are approximately independent given $F_{\le i}$.}

 The next (and last) argument is the most technically challenging part of our proof. At this time, we provide a brief yet meaningful overview of the technique. The full details are provided in the main body (\cref{claim:AttackOpurunity} in  \cref{sec:reduction}). 
 
We show that  assuming nonexistence of io-key-agreement, $F_{i+1}$ and $Z_{i}^\Pc$ are approximately independent given $F_{\le i}$.  In more detail,  the triple $(Z_{i}^\Pc,F_{i+1}, F_{\le i})$ is $\rho$-indistinguishable from $(Y_1,Y_2, F_{\le i})$ where $(Y_1,Y_2)$ is a pair of random variables that are mutually independent given $F_{\le i}$. It would then follow that $\ex{Z_{i}^\Pc\mid F_{i+1}, F_{\le i}}\approx  \ex{Y_1\mid Y_2, F_{\le i}}= \ex{Y_1\mid  F_{\le i}} \approx \ex{Z_{i}^\Pc\mid  F_{\le i}}$ as required. To this end, we use a second result by \citet{HNOSS18New}.\footnote{Assuming the nonexistence of key-agreement protocols, \cref{thm:intro:DecorolatorInf} implies \cref{thm:intro:ForcasterInf}. Yet, we chose to use both results to make the text more modular. }
\begin{theorem}[\citet{HNOSS18New}'s dichotomy, informal]\label{thm:intro:DecorolatorInf}
	Let $\Delta = \tpprot{\Ac}{\Bc}$ be an efficient single-bit output two-party protocol and assume infinitely-often key-agreement protocol does not exist. Then, for any constant $\rho>0$, there exists a poly-time algorithm (\textit{decorrelator}) $\Dcr$ mapping transcripts of $\Delta$ into $[0,1] \times [0,1]$ such that the following holds: let $(X,Y,T)$ be the parties' outputs and transcript in a random execution of $\Delta$, then
	$$(X,Y,T) \cindist_\rho (U_{p^{\Ac}},U_{p^{\Bc}},T)_{(p^{\Ac},p^{\Bc}) \la \Dcr(T)}.$$
\end{theorem}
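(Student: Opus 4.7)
The plan is to prove the statement by contraposition: assuming the non-existence of infinitely-often key-agreement protocols, exhibit the required decorrelator $\Dcr$. A natural first attempt is to set $\Dcr := \Fc$, the forecaster from \cref{thm:intro:ForcasterInf} invoked with a sufficiently small constant $\rho' \le \rho$. This already guarantees that the marginal $p^{\Pc}$ output by $\Fc$ on a random transcript is computationally indistinguishable from the true conditional output distribution of party $\Pc$, i.e., $(X,T) \cindist_{\rho'} (U_{p^\Ac},T)$ and symmetrically for $\Bc$. The question that remains is whether $(U_{p^\Ac}, U_{p^\Bc}, T)$, obtained by sampling the two bits \emph{independently} from the forecasted marginals, also matches the joint distribution $(X, Y, T)$.

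Suppose this candidate fails. Then there exists an efficient distinguisher $D$ and a constant advantage $\rho$ such that $\left|\Pr[D(X, Y, T) = 1] - \Pr[D(U_{p^\Ac}, U_{p^\Bc}, T) = 1]\right| > \rho$. Since the two marginals already agree (by the forecaster guarantee), $D$ must be exploiting computationally-accessible correlation between $X$ and $Y$ conditional on $T$ that is not captured by the product of the forecasted marginals. The next step is to convert this leftover correlation into a shared secret. Concretely, consider a protocol in which the two parties run $\Delta$ in parallel $n$ times, obtaining $(X_1, \ldots, X_n)$ and $(Y_1, \ldots, Y_n)$ respectively, and then apply information reconciliation followed by a computational randomness extractor to produce a common bit. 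The hope is that for infinitely many input lengths $n$, an eavesdropper holding only the concatenated transcripts cannot predict the common bit noticeably better than a random guess, contradicting the assumed absence of io-key-agreement.

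The main obstacle will be amplifying a single distinguisher of constant advantage into a key-agreement with negligible eavesdropper advantage. The standard route goes through an Impagliazzo-style hardcore-measure lemma (or a computational dense-model theorem) to isolate a noticeable set of transcripts on which the conditional joint distribution of $(X,Y)$ is genuinely far from any product, and then to invoke parallel repetition together with a strong (seeded) extractor on this ``good'' set to produce a nearly unbiased shared bit. A delicate point is that the forecaster's indistinguishability guarantee only holds for a constant parameter $\rho$, so the amplification cannot drive the eavesdropper's advantage all the way to a negligible function on every security parameter; this mismatch is precisely what forces the conclusion to be an \emph{infinitely-often} key-agreement rather than a full-fledged one, matching the statement of the theorem.
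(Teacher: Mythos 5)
The statement you are attempting to prove is not proven in this paper at all; it is Theorem~3.18 of \cite{HNOSS18New}, stated here informally and re-stated formally as \cref{thm:Decor}, and it is invoked throughout as a black box. So there is no ``paper's own proof'' to compare against. A correct answer, as far as this paper is concerned, is simply a citation.

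Evaluating your sketch on its own terms: the governing intuition---set $\Dcr:=\Fc$ and, if the product of the forecasted marginals fails to match the joint conditional distribution of $(X,Y)$ given $T$, convert the residual computationally-accessible correlation into key agreement---is indeed the right one, but every step where you write ``the hope is'' or ``the standard route'' is precisely where the real work of \cite{HNOSS18New} lives, and the sketch does not close any of those gaps. Concretely: (i) a single constant-advantage distinguisher does not directly isolate a noticeable set of transcripts on which $(X,Y)\mid T$ is far from every product---Impagliazzo's hardcore lemma and its dense-model variants are statements about hard \emph{predicates}, and recasting them as statements about conditional product structure is a nontrivial translation, not an invocation; (ii) turning ``two parties hold correlated bits while the eavesdropper holds all transcripts'' into a key-agreement protocol requires proving \emph{both} agreement and secrecy after information reconciliation, and the reconciliation messages themselves leak to the eavesdropper, so one cannot just ``apply an extractor''---there is also the unaddressed question of where the extractor seed comes from and why the eavesdropper, who sees it, learns nothing; (iii) your appeal to \cref{thm:intro:ForcasterInf} to fix the marginals and then argue about the joint is a hybrid argument that needs care, and the paper's own footnote points out that under the no-KA hypothesis the dichotomy \emph{implies} the forecaster, not the other way around, so you are trying to bootstrap the stronger statement from the weaker one---possible in principle, but that is the content of an entire separate paper, not a deferral to standard techniques. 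Finally, the infinitely-often qualifier is not, as you suggest, an artifact of the forecaster's constant advantage; it reflects the set of security parameters on which the decorrelator's guarantee in \cite{HNOSS18New} holds, and your sketch does not track which parameters survive the amplification.
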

Namely, assuming io-key-agreement does not exist, the distribution of the parties' output given the transcript is $\rho$-close to the product distribution given by $\Dcr$. We assume for simplicity that the theorem holds for \emph{many-bit} output protocols and not merely single bit (we get rid of this assumption in the actual proof).

We define another variant $\htPi$ of $\pi$ that internally uses the forecaster $\Fc$, and  show that the existence of a decorrelator for $\hPi$ implies that $F_{i+1}$ and $Z_{i}^\Pc$ are approximately independent given $F_{\le i}$, and \cref{eq:Dec:Ind} follows. For concreteness, we focus on party $\Pc=\Bc$. 

Fix $i$ such that $\Ac$ sends the $(i+1)$ message in $\pi$ and define protocol $\htPi= (\hAc,\hBc)$ according to the following specifications: the parties interact just as in $\pi$ for the first $i$ rounds; then $\hBc$ outputs the \ith backup value of $\Bc$ and $\hAc$ internally computes $m_{i+1}$ and outputs $f_{i+1}= \Fc(m_{\le i+1})$.   By \cref{thm:intro:DecorolatorInf} there exists  an efficient decorrelator $\hDcr$ for $\htPi$ \wrt $\rho = 1/r$. That is:

\begin{align}\label{eq:hDcrCloseToReal}
(F_{i+1},Z_i^{\Bc},M_{\le i}) \cindist_{1/r} (U_{p^{\hAc}},U_{p^{\hBc}},M_{\le i})_{(p^{\hAc},p^{\hBc}) \la \hDcr(M_{\le i})},
\end{align}
where now $p^{\hAc}$ describes a non-Boolean distribution, and $U_{p^{\hAc}}$ denotes an independent sample from this distribution. 

Since $\Fc$ and $\hDcr$ both output an estimate of (the expectation of) $Z_i^{\Bc}|M_{\le i}$ in a way that is indistinguishable from the real distribution of $Z_i^{\Bc}$ (given $M_{\le i}$), both algorithms output essentially the same value. Otherwise, the ``accurate'' algorithm can be used to distinguish the output of the ``inaccurate'' algorithm from the real output. It follows that 
\begin{align}\label{eq:hDcrCloseToInd}
(U_{p^{\hAc}},U_{p^{\hBc}},M_{\le i})_{(p^{\hAc},p^{\hBc}) \la \hDcr(M_{\le i})}\cindist_{1/r} (U_{p^{\hAc}},U_{F_i^{\Bc}},M_{\le i})_{(p^{\hAc},\cdot ) \la \hDcr(M_{\le i})}
\end{align} Using a data-processing argument in combination with \cref{eq:hDcrCloseToReal,eq:hDcrCloseToInd}, we deduce that
\begin{align}
\left(F_{i+1},Z_i^{\Bc},F_{\le i}\right) \cindist_{1/r} \left(U_{p^{\hAc}},U_{p^{\hBc}},F_{\le i}\right)_{(p^{\hAc},p^{\hBc}) \la \hDcr(M_{\le i})} \cindist_{1/r} \left(U_{p^{\hAc}},U_{F_i^{\Bc}},F_{\le i}\right)_{(p^{\hAc},\cdot) \la \hDcr(M_{\le i}) }
\end{align}  Finally, conditioned on $F_{\le i} $, we observe that the pair of random variables $(U_{p^{\hAc}},U_{F_i^{\Bc}})_{(p^{\hAc},\cdot) \la \hDcr(M_{\le i})}$ are mutually independent since $U_{F_i^{\Bc}}$ is sampled independently according to $F_i^{\Bc}$, and $F_i^{\Bc}$ is fully determined by $F_{\le i}$.

\subsection{Related Work}
We review some of the relevant work on fair coin flipping protocols.

\paragraph{Necessary hardness assumptions.}

This line of work examines the minimal assumptions required to achieve an $o(1/\sqrt{r})$-bias two-party coin flipping protocols, as done in this paper. The necessity of one-way functions for weaker variants of coin flipping protocol where the honest party is allowed to abort if the other party aborts or deviates from the prescribed protocol, were considered in~\cite{ImpagliazzoLu89,Maji10,HaitnerOmri14, BermanHT18}. 

More related to our bound, prior to our work, \citet{Dachman11} showed that any fully black-box construction of $O(1/r)$-bias two-party protocols based on one-way functions (with $r$-bit input and output) needs $\Omega(r/\log r)$ rounds, and \citet{DachmanMM14} showed that there is no fully black-box and function \textit{oblivious} construction of $O(1/r)$-bias two-party protocols from one-way functions (a protocol is function oblivious if the outcome of protocol is independent of the choice of the one-way function used in the protocol). For the case we are interested in, i.e.~constant-round coin flipping protocols, \cite{Dachman11} already yields that black-box techniques from one-way functions can only guarantee fairness of order $1/\sqrt{r}$.  Finally in a recent work, \citet{MW20} showed that in the random oracle model, any coin flipping protocol can be biased by $\Omega(1/\sqrt{r})$, implying that optimally fair coin flipping protocols cannot be constructed from one-way functions in a black-box manner.

\paragraph{Lower bounds.} 
\citet{Cleve86} proved that, for every $r$-round two-party coin flipping protocol, there exists an efficient adversary that can bias the output by $\Omega(1/r)$. \citet{CleveI93} proved that, for every $r$-round two-party coin flipping protocol, there exists an inefficient fail-stop adversary that biases the output by $\Omega(1/\sqrt{r})$. They also showed that a similar attack exists if the parties have access to an ideal commitment scheme. All above bounds extend to the multi-party case (with no honest majority) via a simple reduction. Very recently, \citet{BHMO18New} showed that \emph{any} $r$-round $\np$-parties coin flipping with $\np^k > r$, for some $k\in \N$, can be biased by $1/(\sqrt{r} \cdot (\log r)^k)$. Ignoring logarithmic factors, this means that if the number of parties is $r^{\Omega(1)}$, the majority protocol of \cite{AwerbuchBCGM1985} is optimal.

\paragraph{Upper bounds.} 
\citet{Blum83} presented a two-party two-round coin flipping protocol with bias $1/4$. \citet{AwerbuchBCGM1985} presented an $\np$-party $r$-round protocol with bias $O(\np/\sqrt{r})$ (the two-party case appears also in \citet{Cleve86}). \citet{MoranNS09} solved the two-party case by giving a two-party $r$-round coin flipping protocol with bias $O(1/r)$. \citet{HaitnerT14} solved the three-party case up to poly-logarithmic factor by giving a three-party coin flipping protocol with bias $O(\polylog(r)/r)$. \citet{BHLT17} showed an $\np$-party $r$-round coin flipping protocol with bias $\widetilde{O}(\np^3 2^\np/r^{\frac{1}{2}+\frac{1}{2^{\np-1}-2}})$. In particular, their protocol for four parties has bias $\widetilde{O}(1/r^{2/3})$, and for $\np = \log \log r$ their protocol has bias smaller than \citet{AwerbuchBCGM1985}.

For the case where less than $2/3$ of the parties are corrupt, \citet{BeimelOO15} showed
an $\np$-party $r$-round coin flipping protocol with bias $2^{2^k}/r$, tolerating up to $t=(\np+k)/2$ corrupt parties. \citet{AO16} showed an $\np$-party $r$-round coin flipping protocol with bias $\widetilde{O}(2^{2^\np}/r)$, tolerating up to $t$ corrupted parties, for constant $\np$ and $t<3\np/4$. 

\subsection{Open Questions}
We show that constant-round coin flipping protocol  with ``small'' bias (\ie $o(1/\sqrt r)$-fair, for $r$ round protocol) implies io-key-agreement. Whether such a reduction can be extended to protocols with super-constant round complexity remains open. The barrier to extending our results is that the dichotomy result of \citet{HNOSS18New} only guarantees indistinguishablility with constant advantage (as opposed to vanishing or negligible advantage). 

The question of reducing oblivious transfer to optimally-fair coin flip is also open. We recall that all known small  bias  coin flipping protocols rely on it \cite{MoranNS16,HaitnerT17, BHLT17}. It is open whether the techniques of \citet{HNOSS18New} can provide a similar dichotomy with respect to (io-) oblivious transfer (as opposed to io-key-agreement) allowing for the realization of oblivious transfer from  $o(1/\sqrt r)$-fair (constant round) coin flip via the techniques of the present paper. 

\Inote{see my edits}

\subsection*{Paper Organization}
Basic definitions and notation used through the paper, are given in \cref{sec:Prlim}. The formal statement and proof of the main theorem are given in \cref{sec:reduction}.

\section{Preliminaries}\label{sec:Prlim}

\subsection{Notation}\label{sec:prelim:notation}
We use calligraphic letters to denote sets, uppercase for random variables and functions,  lowercase for values. 
For $a,b\in \R$,  let $a\pm b$ stand for the interval $[a-b,a+b]$. For $n\in \N$, let $[n] = \set{1,\ldots,n}$ and $(n) = \set{0,\ldots,n}$. Let $\poly$ denote the set of all polynomials, let \ppt stand for probabilistic  polynomial time and   \pptm denote a \ppt algorithm (Turing machine).  A function $\nu \colon \N \to [0,1]$ is \textit{negligible}, denoted $\nu(n) = \negl(n)$, if $\nu(n)<1/p(n)$ for every $p\in\poly$ and large enough $n$. For a sequence $x_1,\ldots, x_r$ and $i\in [r]$, let $x_{\le i}=x_1,\ldots,x_i$ and   $x_{< i}=x_1,\ldots,x_{i-1}$. 



Given a distribution, or random variable,  $D$, we write $x\gets D$ to indicate that $x$ is selected according to $D$. Given a finite set $\cs$, let $s\la \cs$ denote that $s$ is selected according to the uniform distribution over $\cs$. The support of $D$, denoted $\Supp(D)$, be defined as $\set{u\in\Uni: D(u)>0}$. The \emph{statistical distance} between two distributions $P$ and $Q$ over a finite set $\Uni$, denoted as $\SD(P,Q)$, is defined as $\max_{\cs\subseteq \Uni} \size{P(\cs)-Q(\cs)} = \frac{1}{2} \sum_{u\in \Uni}\size{P(u)-Q(u)}$. Distribution ensembles $X=\set{X_\kappa}_{\kappa\in \N}$ and $Y=\set{Y_\kappa}_{\kappa\in \N}$ are \emph{$\delta$-computationally indistinguishable in the set $\I$}, denoted by $X\cindist_{\I,\delta} Y$, if for every \pptm $\Dc$ and sufficiently large $\kappa\in \I$: $\size{\pr{\D(1^\kappa, X_\kappa)=1} - \pr{\D(1^\kappa, Y_\kappa)=1} }\le \delta$.


\subsection{Protocols}\label{sec:prelim:protocols}
Let $\pi= (\Ac,\Bc)$  be a two-party protocol. The protocol $\pi$ is \ppt if the running time of both $\Ac$ and $\Bc$ is polynomial in their input length (regardless of the party they interact with). We denote by $(\Ac(x),\Bc(y))(z)$ a random execution of $\pi$ with private inputs $x$ and $y$, and common input $z$, and sometimes abuse notation and write $(\Ac(x),\Bc(y))(z)$ for the parties' output in this execution.

We will  focus on no-input two-party single-bit output \ppt protocol: the only input of the two \ppt parties is the common security parameter given in unary representation. At the end of the execution,  each party outputs a single bit.  Throughout, we assume \wlg that the transcript contains $1^\kappa$ as the first message. Let $\pi= (\Ac,\Bc)$  be such a two-party single-bit output protocol. 
For $\kappa\in \N$, let $C^{\Ac,\kappa}_\pi$, $C^{\Bc,\kappa}_\pi$ and $T^\kappa_\pi$ denote the outputs of $\Ac$, $\Bc$ and the transcript of $\pi$, respectively, in a random execution of $\pi(1^\kappa)$.  

\subsubsection{Fair Coin Flipping}
Since we are concerned with a lower bound, we only give the game-based definition of coin flipping protocols (see \cite{HaitnerT17} for the stronger simulation-based definition).

\begin{definition}[Fair coin flipping protocols]\label{def:FairCCF}
A  \ppt single-bit output  two-party protocol $\pi = (\Ac,\Bc)$ is an {$\eps$-fair coin flipping protocol}, if the following holds.
	\begin{description}
		\item[Output delivery:] The honest party always outputs a bit (even if the other party acts dishonestly, or aborts).

		\item[Agreement:] The parties always output the same bit in an honest execution. 		
		
		\item[Uniformity:]  $\pr{C^{\Ac,\kappa}_\pi = b} = 1/2$ (and thus  $\pr{C^{\Bc,\kappa}_\pi = b} = 1/2)$, for  both $b\in \zo$ and all $\kappa\in \N$.
\remove{	
	\footnote{The proof of our main result   easily extends to non optimal uniformity  condition. Say, if we  only require that $\pr{C^\Ac_\kappa = b}  \ge 1/4$ for both $b\in \zo$. }
		}
		
		\item[Fairness:] For any \ppt $\Ac^\ast$ and $b\in \zo$, for sufficiently large $\kappa \in \N$ it holds that
		
		$\pr{C^{\Bc,\kappa}_\pi =b}\le 1/2 + \eps$,  and the same holds for the output bit of $\Ac$.
	\end{description}
\end{definition}

\subsubsection{Key-Agreement}
We focus on single-bit output key-agreement protocols. 

\begin{definition}[Key-agreement protocols]\label{def:KA}
A  \ppt single-bit output  two-party protocol $\pi = (\Ac,\Bc)$  is  {\sf io-key-agreement}, if  there exist an infinite $\I\subseteq \N$, such that the following hold for $\kappa$'s in $\I$:
	\begin{description}
		
		\item[Agreement.]  $\pr{C^{\Ac,\kappa}_\pi =  C^{\Bc,\kappa}_\pi} \ge 1- \negl(\kappa)$.
		
		\item[Secrecy.] $\pr{\Ec(T^\kappa_\pi)=C^{\Ac,\kappa}_\pi} \le 1/2 + \negl(\kappa)$, for every \ppt \Ec.
		
	\end{description}

\end{definition}

\subsection{Martingales}

\begin{definition}[Martingales]\label{def:DMartingales} Let $X_0, \ldots, X_r$ be a sequence of random variables. 
We say that $X_0, \ldots, X_r$ is a martingale sequence if \/ $\ex{X_{i+1} \mid X_{\le i} = x_{\le i} } = x_{i} $ for every $i\in[r-1]$. 
\end{definition}

\noindent In plain terms, a sequence is a martingale if the expectation of the next point conditioned on the entire history is exactly the last observed point. One way to obtain a martingale sequence is by constructing a \textit{Doob martingale}. Such a sequence is defined by  $X_i = \ex{f(Z) \mid  Z_{\le i}}$, for arbitrary random variables $Z= (Z_1,\ldots,Z_r)$  and  a function $f$ of interest.  We will use the following fact proven by  \cite{CleveI93} (we use the variant as proven in \cite{BHMO18New}). 

\begin{theorem}\label{thm:MartingalesHaveJumpejumps}
Let $X_0,\ldots, X_r$ be a martingale sequence such that $X_i\in [0,1]$, for every $i\in [r]$. If $X_0=1/2$ and $\pr{X_r\in \set{0,1}}=1$, then $\pr{\exists i\in [r]\st  \size{X_i-X_{i-1}}\ge \frac{1}{4\sqrt{r}} }\ge \frac{1}{20}$.
\end{theorem}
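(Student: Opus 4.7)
The plan is to bound $\Var(Y_r)$ for a suitable stopped martingale $Y$ in two different ways, and compare them to extract the desired lower bound on the probability of a large jump. Set $\alpha := 1/(4\sqrt r)$ and define the stopping time $T := \min\{i \in [r]\st \size{X_i-X_{i-1}}\ge \alpha\}$, with the convention $T := r+1$ when no such index exists, so that $\pr{T \le r}$ is exactly the quantity we want to lower-bound. Consider the stopped process $Y_i := X_{\min(i,T)}$, which is itself a martingale by the optional stopping theorem; in particular $\ex{Y_r} = X_0 = 1/2$ and, by the orthogonality of martingale differences, $\Var(Y_r) = \sum_{i=1}^r \ex{(Y_i - Y_{i-1})^2}$.

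Let $p := \pr{T \le r}$. Note that $(Y_i - Y_{i-1})^2$ equals $(X_i - X_{i-1})^2 < \alpha^2$ for $i < T$ (by minimality of $T$), equals $(X_T - X_{T-1})^2 \le 1$ at $i = T$ (using $X_i \in [0,1]$), and vanishes for $i > T$. Hence on the event $\{T > r\}$ every one of the $r$ squared increments is less than $\alpha^2$, so $\sum_{i=1}^r (Y_i - Y_{i-1})^2 < r\alpha^2 = 1/16$. On the event $\{T \le r\}$ the first $T-1$ squared increments sum to at most $(T-1)\alpha^2 \le 1/16$ while the single squared increment at $i = T$ contributes at most $1$, so the sum is at most $17/16$. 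Taking expectations yields the upper bound
\[
\Var(Y_r) \;=\; \sum_{i=1}^r \ex{(Y_i - Y_{i-1})^2} \;\le\; \tfrac{1}{16}(1-p) + \tfrac{17}{16}\, p \;=\; \tfrac{1}{16} + p.
\]

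For the matching lower bound, observe that on $\{T > r\}$ the stopped martingale agrees with $X$ at the final time, so $Y_r = X_r \in \{0,1\}$ and therefore $(Y_r - 1/2)^2 = 1/4$; hence
\[
\Var(Y_r) \;=\; \ex{(Y_r - 1/2)^2} \;\ge\; \tfrac{1}{4}(1-p).
\]
Combining the two bounds gives $\tfrac{1}{4}(1-p) \le \tfrac{1}{16} + p$, which rearranges to $p \ge 3/20 \ge 1/20$, as desired. The main step that requires care is the upper bound on $\Var(Y_r)$: one must separately account for the single possibly-large increment at time $T$, bounded only by $1$ via $X_i \in [0,1]$ rather than by $\alpha$. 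The rest is a routine application of optional stopping and of the orthogonality of martingale differences.
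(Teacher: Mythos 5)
Your proof is correct, and in fact slightly stronger than the stated bound. One thing to note: the paper does not actually prove this theorem; it cites it as a known fact from Cleve and Impagliazzo, in the variant form established in \cite{BHMO18New}, so there is no in-paper argument against which to compare line by line. Your argument --- stop the martingale at the first time $T$ at which an increment of size $\alpha = 1/(4\sqrt r)$ occurs, use orthogonality of martingale differences to write $\Var(Y_r)=\sum_i \ex{(Y_i-Y_{i-1})^2}$, bound this from above by $\tfrac{1}{16}+p$ (all pre-stop increments are below $\alpha$, and the single increment at time $T$ is bounded only by $1$ since $X_i\in[0,1]$) and from below by $\tfrac14(1-p)$ (on $\set{T>r}$ one has $Y_r=X_r\in\set{0,1}$), and compare --- is the standard quadratic-variation route to such a statement, and the bookkeeping checks out, yielding $p\ge 3/20$, which dominates the claimed $1/20$. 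Two points that your proof relies on and handles correctly, but are worth flagging as the places a careless version would go wrong: $T$ is a stopping time with respect to the natural filtration of $X$ (the event $\set{T=i}$ is determined by $X_0,\dots,X_i$), so the stopped process is indeed a martingale and the orthogonality of its increments applies; and the jump at time $T$ must be bounded by $1$ rather than by $\alpha$, which is exactly what produces the additive $+p$ in the upper bound and makes the two-sided comparison informative.
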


\newcommand{\rdm}{s}
\newcommand{\bu}{v}

\section{Fair Coin Flipping to Key-Agreement}\label{sec:reduction}
In this section, we prove our main result: if there exist constant-round coin flipping protocols which improve over the $1/\sqrt{r}$-bias majority protocol of \cite{AwerbuchBCGM1985}, then infinitely-often key-agreement exists as well. Formally, we prove the following theorem.
\def \mainthm{
The following holds for any (constant)  $r\in \N$: if there exists an $r$-round, $\frac{1}{25600\sqrt{r}}$-fair two-party coin flipping protocol, see \cref{def:FairCCF}, then there exists an  infinitely-often key-agreement protocol.}

\begin{theorem}\label{thm:main}
	\mainthm\footnote{\cref{def:FairCCF} requires perfect uniformity: the common output in an honest execution is an unbiased bit. The proof given below, however,  easily extends to any non-trivial uniformity condition, \eg the common output equals $1$ with  probability $3/4$.}\footnote{We remark that we did not optimize the value of the constant.}
\end{theorem}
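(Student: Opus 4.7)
The plan is to prove the contrapositive: assuming io-key-agreement does not exist, I will design, for every constant-round protocol $\pi=(\Ac,\Bc)$ satisfying the output-delivery, agreement and uniformity conditions, an efficient adversary that biases the honest party's output by at least $1/(25600\sqrt r)$, contradicting the fairness assumption. The high-level strategy is to emulate the inefficient $\Omega(1/\sqrt r)$-attack of \citet{CleveI93} sketched in \cref{sec:intro:CI}, substituting the conditional expectations of the outcome and of the backup values (which a polynomial-time adversary cannot compute in general) by the outputs of the polynomial-time forecaster provided by \cref{thm:intro:ForcasterInf}. The non-existence of io-key-agreement will be used only in the step that establishes a conditional-independence property which, unlike in \cite{CleveI93}, is no longer free after this substitution.

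Concretely, I would first apply \cref{thm:intro:ForcasterInf} to the $(r+1)$-round protocol $\tPi=(\tAc,\tBc)$ where $\tAc$ first sends a uniform $i\in[r]$, the parties then execute $\pi$ for $i$ rounds, and each party outputs its $i$\textsuperscript{th} backup value. Choosing a small enough constant $\rho>0$ (which is legitimate because $r$ is constant and the required bias $1/\sqrt r$ is also constant) yields a poly-time forecaster $\Fc$ for which $(Z_i^{\Pc},M_{\le i}) \cindist_{O(\rho)} (U_{p^{\Pc}},M_{\le i})_{(p^{\Ac},p^{\Bc})\gets\Fc(i,M_{\le i})}$ for both $\Pc\in\set{\Ac,\Bc}$ and every $i\in[r]$. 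Defining $F_i\eqdef\Fc(i,M_{\le i})$ and the Doob martingale $X_i\eqdef\ex{C\mid F_{\le i}}$, where $C$ is the common honest output, \cref{thm:MartingalesHaveJumpejumps} provides the \emph{gap} property: with probability at least $1/20$, some round $i$ satisfies $\size{X_i-X_{i-1}}\ge 1/(4\sqrt r)$. Since $F_i$ has constant-size support and $r$ is constant, $X_i$ is efficiently computable from $M_{\le i}$.

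Two further properties are needed to convert the gap into an efficient attack. The \emph{backup-approximation} property states that $\size{X_i-\ex{Z_i^{\Pc}\mid F_{\le i}}}$ is small with high probability; if it fails with non-negligible probability, then a trivial adversary that aborts at the appropriate round once this discrepancy is detected already biases the honest party by $\Omega(1/\sqrt r)$, and we are done. The \emph{independence} property states that for each round $i$ in which party $\Po$ sends the $(i+1)$\textsuperscript{st} message, $\ex{Z_i^{\Pc}\mid F_{\le i+1}}\approx \ex{Z_i^{\Pc}\mid F_{\le i}}$. Given these two properties together with the gap, the $\Omega(1/\sqrt r)$-attack proceeds exactly as in \cref{sec:intro:CI}: the corrupted party monitors, before sending its next message, whether $X_i-\ex{Z_{i-1}^{\bar\Pc}\mid F_{\le i}}$ has the desired sign and magnitude, and aborts if so. Tracking the constants through this combination gives the claimed bias $1/(25600\sqrt r)$.

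The main obstacle, and the heart of the proof, is establishing independence. Here the non-existence of io-key-agreement is invoked via \cref{thm:intro:DecorolatorInf}. For each round $i$ in which, say, $\Ac$ sends the $(i+1)$\textsuperscript{st} message, I would define an auxiliary protocol $\htPi=(\hAc,\hBc)$ that runs $\pi$ for $i$ rounds, then lets $\hBc$ output $Z_i^{\Bc}$ while $\hAc$ internally computes $m_{i+1}$ and outputs $\Fc(i+1,m_{\le i+1})$. Applying (a many-valued-output extension of) \cref{thm:intro:DecorolatorInf} to $\htPi$ with advantage $\rho$ yields a decorrelator $\hDcr$ such that $(F_{i+1},Z_i^{\Bc},M_{\le i})\cindist_\rho (U_{p^{\hAc}},U_{p^{\hBc}},M_{\le i})_{(p^{\hAc},p^{\hBc})\gets\hDcr(M_{\le i})}$, where the right-hand side is a product distribution given $M_{\le i}$. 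Because $\Fc$ and the second marginal of $\hDcr$ both produce forecasts of $Z_i^{\Bc}\mid M_{\le i}$ that are indistinguishable from the real distribution, a hybrid argument shows that I may replace $U_{p^{\hBc}}$ by an independent sample from $F_i^{\Bc}$ at the price of $O(\rho)$ in statistical terms; applying a data-processing step to pass from $M_{\le i}$ to $F_{\le i}$ (which is a function of $M_{\le i}$) preserves the indistinguishability and yields that $(F_{i+1},Z_i^{\Bc},F_{\le i})$ is $O(\rho)$-close to a distribution under which $F_{i+1}$ and $Z_i^{\Bc}$ are mutually independent given $F_{\le i}$. This gives the required approximate equality of conditional expectations and closes the proof.
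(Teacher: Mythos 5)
Your proposal matches the paper's proof in structure and in every essential idea: you apply the forecaster theorem to the same random-truncation variant $\tPi$ of $\pi$, you build the Doob martingale $X_i = \ex{C \mid F_{\le i}}$ and invoke \cref{thm:MartingalesHaveJumpejumps} for the gap, you isolate the same \emph{backup-approximation} property (whose failure yields a trivial forecaster-based attack), and you isolate the same \emph{independence} property, establishing it via the dichotomy/decorrelator theorem applied to the same auxiliary protocol $\htPi$ (in which one party outputs the backup value and the other outputs the forecaster's evaluation of the next round's transcript), followed by the same hybrid showing the decorrelator's second marginal can be swapped with $F_i^{\Bc}$ and a data-processing step from $M_{\le i}$ to $F_{\le i}$. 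One small inaccuracy worth flagging: $X_i$ is not, strictly speaking, \emph{efficiently computable} from $M_{\le i}$; even with constant-size support of $F_{\le i}$, the conditional expectation is only \emph{efficiently approximable} to within $\rho$ by Monte Carlo sampling of fresh executions of $\pi$ (this is the content of the paper's \cref{claim:computingGV} and \cref{algo:allgamma}), and you must then carry the additional $O(r\rho)$ approximation error through the final accounting. This does not change the route of the argument, since the error is absorbed by choosing $\rho$ a sufficiently small constant (and $r$ constant is what makes both the sample complexity and $\rho$ legitimate), but a careful write-up must distinguish exact computation from sampling-based approximation, which is precisely why the paper introduces the separate algorithm $\Gc$ rather than working directly with $g$.
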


Before formally proving \cref{thm:main}, we briefly recall the outline of the proof as presented in the introduction (we ignore certain constants in this outline). We begin with a good forecaster for the coin flipping protocol $\pi$ (which must exist, according to \cite{HNOSS18New}), and we define an efficiently computable conditional expected outcome sequence $X= (X_0,\ldots,X_r)$ for $\pi$, conditioned on the forecaster's outputs. Then, we show that (1) the \ith backup value (default output in case the opponent aborts) should be close to $X_i$; otherwise, an efficient attacker can use the forecaster  to bias the output of the other party  (this attack is applicable regardless of the existence of infinitely-often key-agreement). And (2), since $X$ is a martingale sequence, ``large'' $1/\sqrt{r}$-gaps are bound to occur in some round, with constant probability. Hence, combining (1) and (2), with constant probability, for some $i$, there is a $1/\sqrt{r}$-gap between $X_i$ and the forecasters' prediction for one party \emph{at the preceding round} $i-1$. Therefore, unless protocol $\pi$ implies io-key-agreement, the aforementioned gap can be exploited to bias that party's output by $1/\sqrt{r}$, by instructing the opponent to abort as soon as the gap is detected. In more detail, the success of the attack requires that (3) the event that a gap occurs is (almost) \emph{independent} of the backup value of the honest party.  It turns out that if $\pi$ does not imply io-key-agreement, this third property is guaranteed by the dichotomy theorem of \cite{HNOSS18New}. In summary, if io-key-agreement does not exist, then protocol $\pi$ is at best $1/\sqrt{r}$-fair.

\smallskip
Moving to the formal proof, fix an $r$-round, two-party coin flipping protocol $\pi= (\Ac,\Bc)$ (we assume nothing about its fairness parameter for now). We associate the following  random variables  with a random honest execution of  $\pi(1^\kappa)$.  Let $M^\pk = (M^\pk_1,\ldots,M^\pk_r)$ denote the messages of the protocol and let $C^\pk$ denote the (always) common output of the parties. For $i\in \set{0,\ldots,r}$ and $\Pc\in \set{\Ac,\Bc}$, let $Z_i^{\Pc,\pk}$ be the ``backup'' value party $\Pc$ outputs, if the other party aborts after the \ith message was sent. In particular, $Z_r^{\Ac,\pk} = Z_r^{\Bc,\pk} =C^\pk $ and    $\pr{C^\pk = 1}   = 1/2$. 


\paragraph{Forecaster for $\pi$.}

We are  using a \textit{forecaster} for $\pi$, guaranteed by the following theorem (proof  readily follows from  \citet[Thm 3.8]{HNOSS18New}).

\begin{theorem}[\citet{HNOSS18New}, existence of forecasters]\label{thm:Forcasters}
	Let  $\Delta$ be a no-input, single-bit output two-party protocol. Then for any constant $\rho>0$,  there exists a \ppt constant output-length algorithm  \Fc (forecaster) mapping  transcripts of  $\Delta$ into (the binary description of) pairs in $[0,1] \times [0,1]$ and an infinite set $\I \in \N$ such that the following holds:  let $C^{\Ac,\pk}$, $C^{\Bc,\pk}$ and $T^\pk$ denote the parties' outputs and protocol transcript, respectively, in a random execution of $\Delta(1^\pk)$. Let $m(\kappa)\in \poly$ be a bound on the number of coins used by $\Fc$ on  transcripts in $\supp(T^\pk)$, and let $S^\pk$ be a uniform string of length $m(\kappa)$. Then,
	
	\begin{itemize}
		\item 	$(C^{\Ac,\pk},T^\pk,S^\pk)  \cindist_{\rho,\I} (U_{p^{\Ac}},T^\pk,S^\pk)_{(p^{\Ac},\cdot) = \Fc(T^\pk;S^\pk)}$, and 
		
		\item 	$(C^{\Bc,\pk},T^\pk,S^\pk)  \cindist_{\rho,\I} (U_{p^{\Bc}},T^\pk,S^\pk)_{(\cdot,p^{\Bc}) = \Fc(T^\pk;S^\pk)}$. 
		
	\end{itemize}
letting $U_p$ be a Boolean random variable taking the value $1$ with probability $p$.\footnote{\citet{HNOSS18New} do not limit the output-length of $\Fc$. Nevertheless, by  applying \cite{HNOSS18New}  with  parameter $\rho/2$ and  chopping each of the forecaster's  outputs to the first  $\ceil{\log 1/\rho} +1$ (most significant) bits, yields the desired constant output-length  forecaster.} 
\end{theorem}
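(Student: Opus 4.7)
The plan is to derive the theorem as a rounding corollary of \citet[Theorem 3.8]{HNOSS18New}, which already provides a \ppt forecaster $\Fc'$ satisfying the two indistinguishablity conditions above for any desired constant parameter $\rho' > 0$, but without an a priori bound on the output length of the forecasted probabilities. The only content to add here is a truncation step that shortens the output to a length depending solely on $\rho$, so that in the downstream use of this theorem in the proof of \cref{thm:main} (where one defines a Doob martingale $X_i = \Ex[C \mid F_{\le i}]$ and needs $X_i$ to be efficiently computable) the forecast $F_i$ has constant-size support.

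Concretely, I would invoke \cite{HNOSS18New} with parameter $\rho' := \rho/2$ to obtain $\Fc'$ and an infinite set $\I \subseteq \N$, set $\ell := \lceil \log(1/\rho) \rceil + 2$, and define $\Fc(t; s)$ to run $\Fc'(t; s)$ internally and output, for each of the two coordinates $(p^\Ac, p^\Bc) = \Fc'(t; s)$, the $\ell$-bit truncation $(\tilde{p}^\Ac, \tilde{p}^\Bc)$ obtained by keeping only the top $\ell$ bits of each binary expansion. Since $\ell$ depends only on the constant $\rho$, this yields a constant output-length \ppt algorithm whose randomness bound $m(\kappa)$ coincides with that of $\Fc'$ (truncation is deterministic), and by construction $|\tilde{p}^\Pc - p^\Pc| \le 2^{-\ell} \le \rho/4$ pointwise for $\Pc \in \{\Ac, \Bc\}$.

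The main thing to verify is then that this rounding costs at most $\rho/2$ in indistinguishablity, so that the combined parameter is at most $\rho$. For any fixed $t, s$, the Boolean random variables $U_{p^\Ac}$ and $U_{\tilde{p}^\Ac}$ have statistical distance exactly $|p^\Ac - \tilde{p}^\Ac| \le \rho/4$, so averaging over $(T^\kappa, S^\kappa)$ gives $\SD\bigl((U_{p^\Ac}, T^\kappa, S^\kappa), (U_{\tilde{p}^\Ac}, T^\kappa, S^\kappa)\bigr) \le \rho/4$. Combining this statistical closeness with the $\rho/2$-computational indistinguishablity of $(C^{\Ac,\kappa}, T^\kappa, S^\kappa)$ from $(U_{p^\Ac}, T^\kappa, S^\kappa)$ on $\I$ yields the required $\rho$-indistinguishablity for the $\Ac$-coordinate of $\Fc$; the argument for the $\Bc$-coordinate is symmetric. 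This is really the only step with any substance: the heavy lifting (existence of a forecaster satisfying the unbounded-output-length version of the statement) is entirely handled by the cited theorem of \cite{HNOSS18New}, and what remains is essentially just bookkeeping to trade $\rho/2$ indistinguishablity for $\rho/4$ pointwise rounding error.
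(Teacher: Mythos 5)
Your proposal is correct and matches the paper's argument exactly: the paper proves this theorem solely via the footnote, which invokes \citet[Thm 3.8]{HNOSS18New} with parameter $\rho/2$ and truncates the forecaster's output to the top $\lceil \log(1/\rho)\rceil + 1$ bits, then combines the $\rho/2$ computational error with the $\rho/2$ statistical rounding error. You keep one extra bit ($\ell = \lceil \log(1/\rho)\rceil + 2$), making the rounding cost $\rho/4$ rather than $\rho/2$, but this is an inessential variation.
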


Since we require a forecaster for all (intermediate) backup values of $\pi$, we apply \cref{thm:Forcasters} \wrt  the following variant of protocol $\pi$, which simply stops the execution at a random round.

\begin{protocol}[$\tPi = \tpprot{\tAc}{\tBc}$]
\item Common input: security parameter $1^\pk$.

\item Description:

	\begin{enumerate}
		\item  $\tAc$ samples $i\la  [r]$ and sends it to $\tBc$.
		\item The parties interact in  the first $i$ rounds of a random execution of $\pi(1^\kappa)$, with  $\tAc$ and $\tBc$ taking the role of $\Ac$ and $\Bc$ receptively. 
		
		Let $z_i^\Ac$ and $z_i^\Bc$ be the \ith backup values of  $\Ac$ and $\Bc$ as computed by the parties in the above execution.

		\item $\tAc$ outputs  $z_i^\Ac$,  and $\tBc$ outputs $z_i^\Bc$.
	\end{enumerate}
\end{protocol}

 Let $\rho=10^{-6} \cdot r^{-5/2}$. Let  $\I\subseteq \N$ and a \ppt $\Fc$ be the infinite set and \ppt forecaster resulting by applying \cref{thm:Forcasters} \wrt protocol  $\tPi$ and  $\rho$, and let $S^\pk$ denote a long enough uniform string to be used by $\Fc$ on transcripts of $\tPi(1^\pk)$.  The following holds \wrt $\pi$.

\begin{claim}\label{claim:goodForcaster}
For  $I \la [r]$, it holds that
\begin{itemize}
		
 \item $( Z_I^{\Ac,\pk},M^{\pk}_{\le I},S^\pk)   \cindist_{\rho,\I} (U_{p^{\Ac}},M^\pk_{\le I},S^\pk)_{(p^{\Ac},\cdot) = \Fc(M_{\le I};S^\pk)}$, and

\item $ (Z_I^{\Bc,\pk},M^{\pk}_{\le I},S^\pk)   \cindist_{\rho,\I} (U_{p^{\Bc}},M^\pk_{\le I},S^\pk)_{(\cdot,p^{\Bc}) = \Fc(M_{\le I};S^\pk)}$,
\end{itemize}
letting $\Fc(m_{\le i};r) = \Fc(i,m_{\le i};r)$.
\end{claim}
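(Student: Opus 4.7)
The claim is essentially an immediate consequence of \cref{thm:Forcasters} applied to the protocol $\tPi$, once we reconcile the notation. My plan is as follows.

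First, I will invoke \cref{thm:Forcasters} on $\tPi$ with parameter $\rho = 10^{-6}\cdot r^{-5/2}$, which already supplies the forecaster $\Fc$ and the infinite set $\I\subseteq \N$ used in the statement of the claim. Then I will identify the random variables appearing in the conclusion of \cref{thm:Forcasters} (when instantiated with $\Delta=\tPi$) with those appearing in the claim. By construction of $\tPi$, a random execution of $\tPi(1^\pk)$ samples $I\la [r]$, runs $\pi(1^\pk)$ for $I$ rounds, and outputs the backup values; hence the transcript of $\tPi(1^\pk)$ is exactly $(1^\kappa, I, M^\pk_{\le I})$ (up to the initial security-parameter message, which by convention prepends every transcript), while the outputs of $\tAc$ and $\tBc$ are $Z_I^{\Ac,\pk}$ and $Z_I^{\Bc,\pk}$ respectively.

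Second, I will translate the two indistinguishability conclusions of \cref{thm:Forcasters}:
\[
(C^{\tAc,\pk},T_{\tPi}^\pk,S^\pk) \;\cindist_{\rho,\I}\; (U_{p^{\tAc}},T_{\tPi}^\pk,S^\pk)_{(p^{\tAc},\cdot) = \Fc(T_{\tPi}^\pk;S^\pk)},
\]
and analogously for $\tBc$. Substituting the identifications above and using the stated notational convention $\Fc(m_{\le i};s)=\Fc(i,m_{\le i};s)$, the right-hand side becomes $(U_{p^{\Ac}},I,M^\pk_{\le I},S^\pk)_{(p^\Ac,\cdot)=\Fc(M^\pk_{\le I};S^\pk)}$, and the left-hand side becomes $(Z_I^{\Ac,\pk},I,M^\pk_{\le I},S^\pk)$.

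Finally, I will observe that $I$ is determined by the length of $M^\pk_{\le I}$, so presenting $I$ as a separate coordinate to the distinguisher gives it no extra power beyond having $M^\pk_{\le I}$; hence dropping $I$ from the tuple preserves $\rho$-indistinguishability, and we obtain exactly the two statements of the claim. There is no real obstacle — the only thing to verify carefully is that the randomness string $S^\pk$ fed to $\Fc$ via the convention $\Fc(m_{\le i};s) = \Fc(i,m_{\le i};s)$ is the \emph{same} uniform string postulated in \cref{thm:Forcasters} for $\tPi$-transcripts, which is immediate from the definition.
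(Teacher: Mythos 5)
Your proposal is correct and matches the paper's own (one-line) proof, which reads simply ``Immediate, by \cref{thm:Forcasters} and the definition of $\tpi$''; you have merely spelled out the identification of the $\tPi$-transcript with $(I, M^\pk_{\le I})$, the translation via the notational convention $\Fc(m_{\le i};s)=\Fc(i,m_{\le i};s)$, and the harmless dropping of the redundant coordinate $I$ (which is recoverable as the length of $M^\pk_{\le I}$, so this is lossless data processing). No gap.
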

\begin{proof}
	Immediate, by \cref{thm:Forcasters} and the definition of $\tpi$.
\end{proof}
We assume  \wlg that the common output appears on the last message of $\pi$ (otherwise, we can add a final message that contains this value, which does not hurt the security of $\pi$). Hence, \wlg it holds that  $\Fc(m_{\le r};\cdot) = (b,b)$, where $b$ is the output bit as implied by $m_{\le r}$ (otherwise, we can change $\Fc$ to do so without hurting its forecasting quality).

For $\pk \in\N$, we define the random variables  $F_0^\pk,\ldots, F_r^\pk$, by  
\begin{align}
F_i^\pk = (F_i^{\Ac,\pk},F_i^{\Bc,\pk}) =  \Fc(M_{\le i};S^\pk)
\end{align}

\paragraph{The expected outcome sequence.}  
To attack the protocol, it is useful to evaluate  at each round  the expected outcome of the  protocol conditioned on the forecasters' outputs so far. 
To alleviate notation, we assume that the value of $\pk$ is determined by $\size{S^\pk}$. 
\begin{definition}[the expected outcome function]\label{def:GameValueFunction}
For $\pk\in \N$, $i\in [r]$,  $f_{\le i} \in \supp(F^\pk_{\le i})$ and $\rdm\in \Supp(S^\pk)$, let   
	\begin{align*}
	g(f_{\le i},\rdm )=\ex{C^\pk\mid F^\pk_{\le i}=f_{\le i}, S^\pk =\rdm}.
	\end{align*}
\end{definition}
Namely, $g(f_{\le i},\rdm )$ is the probability that the output of the protocol in a random execution is $1$, given that  $\Fc(M_{\le j};s)= f_j$ for every  $j\in (i)$ and $M_1,\ldots,M_r$ being the transcript of this execution.

\paragraph{Expected outcome sequence is approximable.}
The following claim, proven in \cref{sec:computingGV}, yields that the expected outcome sequence  can be approximated efficiently.

\def\claimcomputingGV
{
	
	There exists  \pptm $\Gc$ such that
	
	$$\pr{\Gc(F^\pk_{\le i},S^\pk)\notin  g(F^\pk_{\le i} ,S^\pk) \pm \rho} \le \rho,$$
	 for every $\pk \in \N$ and  $i  \in [r]$.

}	

\begin{claim}[Expected outcome sequence is  approximable]\label{claim:computingGV}
	\claimcomputingGV
\end{claim}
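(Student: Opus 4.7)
The plan is to have $\Gc$ be a straightforward rejection-sampling estimator. Since $r$ is constant in \cref{thm:main}, the parameter $\rho = 10^{-6}\cdot r^{-5/2}$ is a constant, and \cref{thm:Forcasters} guarantees that \Fc has constant output length. Therefore each $F^\pk_j$ takes values in a set of constant size, and $F^\pk_{\le i}$ ranges over a set of constant size $N$ (independent of $\pk$). This observation is what makes rejection sampling viable: only constantly many possible partial forecasts exist, so each one is hit with non-negligible probability on average.

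On input $(f_{\le i},s)$, the algorithm $\Gc$ will sample $M$ independent honest executions of $\pi(1^\pk)$ with fresh randomness, yielding transcript-output pairs $(t^{(1)},c^{(1)}),\ldots,(t^{(M)},c^{(M)})$. For each $k$ it computes $\Fc(t^{(k)}_{\le j};s)$ for $j\in[i]$, and forms the set $K = \{k\in[M] : \Fc(t^{(k)}_{\le j};s) = f_j \text{ for all } j\in[i]\}$. If $|K| \geq M_0$ for a suitable threshold $M_0$, $\Gc$ outputs the empirical mean $\frac{1}{|K|}\sum_{k\in K}c^{(k)}$; otherwise it outputs $1/2$. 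Since $N$ and $\rho$ are constants and one execution of $\pi(1^\pk)$ together with $i\le r$ invocations of $\Fc$ takes $\poly(\pk)$ time, any polynomial choice $M = M(N,\rho)$ yields a \pptm.

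For the analysis, I would set $\epsilon := \rho/(2N)$ and define the good set $G := \{(f,s) : \pr{F^\pk_{\le i} = f \mid S^\pk = s} \geq \epsilon\}$. Its complement carries small mass under the joint distribution of $(F^\pk_{\le i}, S^\pk)$: for each fixed $s$, at most $N$ values of $f$ are ``bad'' and each contributes less than $\epsilon$, so the conditional mass of the bad set given $s$ is at most $N\epsilon = \rho/2$, and this bound survives averaging over $s$. For $(f,s) \in G$, standard Chernoff and Hoeffding inequalities imply that, for a sufficiently large polynomial $M = \Theta(N\log(1/\rho)/\rho^3)$ and the threshold $M_0 := M\epsilon/2$, with probability at least $1 - \rho/2$ over $\Gc$'s randomness both (i) $|K| \geq M_0$ and (ii) the empirical mean of $\{c^{(k)}: k \in K\}$ lies within $\rho$ of $g(f,s)$. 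A union bound over these two failure modes (bad $(f,s)$, and bad sampling for good $(f,s)$) yields the desired $\rho$ bound on $\pr{\Gc(F^\pk_{\le i}, S^\pk) \notin g(F^\pk_{\le i}, S^\pk) \pm \rho}$.

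The only mild subtlety is that the Hoeffding step requires i.i.d.\ samples, so one must check that conditioning on $|K|$ being large does not break independence. This is immediate, though: conditioned on $|K| = k_0$, the retained outputs $\{c^{(k)} : k \in K\}$ are i.i.d.\ Bernoulli with mean exactly $g(f,s)$, since each retained transcript is distributed according to $\pi(1^\pk)$ conditioned on the event $\Fc(t_{\le j};s) = f_j$ for all $j\in[i]$. Consequently Hoeffding applies to the conditional distribution, completing the analysis.
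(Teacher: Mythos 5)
Your proposal is correct and follows essentially the same approach as the paper: both build $\Gc$ as a rejection-sampling estimator that runs many honest executions of $\pi(1^\pk)$, keeps those whose partial forecasts match $f_{\le i}$, outputs the empirical mean of the retained outputs, and splits the failure probability into (i) the event that $(f_{\le i},s)$ is too unlikely (bounded using the constant-size forecast domain $N = c^r$) and (ii) a Hoeffding/Chernoff deviation for the likely values. Minor cosmetic differences (you use an explicit threshold $M_0$ and default output $1/2$, the paper just outputs $p/q$ with a $0$ default; your conditional-independence justification versus the paper's separate numerator/denominator Hoeffding bounds) do not change the argument.
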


Algorithm $\Gc$ approximates the value of $g$ on input $(f_{\le i},s)\in \supp(F^\pk_{\le i} ,S^\pk)$ by running multiple independent instances of protocol $\pi(1^\kappa)$ and keeping track of the number of times it encounters $f_{\le i}$ and the protocol outputs one. Standard approximation techniques yield that, unless $f_{\le i}$ is very unlikely, the output of $\Gc$ is close to $g(f_{\le i},s)$. \cref{claim:computingGV} follows by carefully choosing the number of iterations for $\Gc$ and bounding the probability of encountering an unlikely $f_{\le i}$.

\paragraph{Forecasted backup values  are close to  expected outcome sequence.}
The following claim bounds the probability that the expected outcome sequence and the forecaster's outputs deviate by more than $1/8\sqrt{r}$. The proof is given in \cref{sec:BUappxGV}.

\def\claimBUappxGV
{
Assuming $\pi$ is $\frac{1}{6400\sqrt{r}}$-fair,  then 
\begin{align*}
\pr{\exists  i \in [r] \st \size{g(F^\pk_{\le i} ,S^\pk) - F^{\Pc,\pk}_i}\ge 1/8\sqrt{r}}< 1/100
\end{align*}
 for both $\Pc\in \set{\Ac,\Bc}$ and  large enough  $\kappa\in \I$.
}	

\begin{claim}[Forecasted backup  values  are close to  expected outcome sequence]\label{claim:BUappxGV}
	\claimBUappxGV
\end{claim}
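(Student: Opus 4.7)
The plan is to prove the contrapositive: if the displayed probability exceeds $1/100$ for some $\Pc$, I will construct a \ppt attacker biasing $\pi$ by more than $1/6400\sqrt r$. Fix $\Pc=\Bc$ (the case $\Ac$ is symmetric). Splitting the failure event by the sign of the gap $g-F^\Bc$ and pigeonholing, \wlg the event $E^+=\{\exists i\in[r]: F_i^{\Bc,\pk}-g(F_{\le i}^\pk,S^\pk)\ge 1/8\sqrt r\}$ has probability at least $1/200$ (the opposite-sign case gives a mirrored attack biasing toward $0$). The attacker $\Ac^\ast$ plays honestly as $\Ac$ while internally sampling the forecaster's randomness $s$, computing $F_j^\pk=\Fc(M_{\le j}^\pk;s)$ after each round, and invoking $\Gc(F_{\le j}^\pk,s)$ to estimate $g(F_{\le j}^\pk,s)$. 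At the first round $i$ for which $F_i^{\Bc}-\Gc(F_{\le i},s)\ge 1/10\sqrt r$ (the reduced threshold absorbs $\Gc$'s $\rho$-error), $\Ac^\ast$ aborts, causing $\Bc$ to output $Z_i^{\Bc,\pk}$; if no such $i$ occurs, $\Ac^\ast$ completes the protocol and $\Bc$ outputs $C^\pk$. Let $\tau^\ast\in[r]$ denote the stopping round, interpreted as $r$ when no trigger fires.

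I will decompose the bias as
\al{
\ex{\text{$\Bc$'s output}}-\tfrac12
&= \underbrace{\ex{Z_{\tau^\ast}^{\Bc,\pk}-F_{\tau^\ast}^{\Bc,\pk}}}_{(\mathrm{I})} + \underbrace{\ex{F_{\tau^\ast}^{\Bc,\pk}-g(F_{\le\tau^\ast}^\pk,S^\pk)}}_{(\mathrm{II})} + \underbrace{\ex{g(F_{\le\tau^\ast}^\pk,S^\pk)}-\tfrac12}_{(\mathrm{III})}.
}
Term~$(\mathrm{III})$ vanishes: $g(F_{\le i}^\pk,S^\pk)=\ex{C^\pk\mid F_{\le i}^\pk,S^\pk}$ is a Doob martingale in the filtration $\sigma(F_{\le i}^\pk,S^\pk)$ (augmented with $\Ac^\ast$'s independent internal coins), so optional stopping yields $\ex{g(F_{\le\tau^\ast}^\pk,S^\pk)}=\ex{C^\pk}=1/2$. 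Term~$(\mathrm{I})$ is bounded by $r\rho$ via \cref{claim:goodForcaster}: I construct the distinguisher $D(b,I,m_{\le I},s)$ that (using its own coins to simulate $\Gc$) outputs $b$ when the trigger computed from $(m_{\le I},s)$ first fires at round $I$ and $0$ otherwise; its advantage equals $(1/r)\cdot|\ex{Z_{\tau^\ast}^\Bc\mathbf{1}[\text{trigger fires}]}-\ex{F_{\tau^\ast}^\Bc\mathbf{1}[\text{trigger fires}]}|\le\rho$, while the ``no trigger'' event contributes $0$ to~$(\mathrm{I})$ because $F_r^{\Bc,\pk}=C^\pk=Z_r^{\Bc,\pk}$ by the convention that the last message reveals the output.

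For term~$(\mathrm{II})$, \cref{claim:computingGV} together with a union bound over the $r$ rounds shows that $|\Gc(F_{\le i},S)-g(F_{\le i},S)|\le\rho$ holds simultaneously for all $i\in[r]$ with probability at least $1-r\rho$; call this event $A$. On $E^+\cap A$ the trigger fires at some $\tau\le r$ with $F_\tau^{\Bc}-g(F_{\le\tau}^\pk,S^\pk)\ge 1/10\sqrt r-\rho$, and $\pr{E^+\cap A}\ge 1/200-r\rho$; the contribution of $\neg A$ to $(\mathrm{II})$ is bounded in absolute value by $r\rho$, and the ``no trigger'' contribution is $0$ (using $F_r^\Bc-g(F_{\le r},S)=0$). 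Hence $(\mathrm{II})\ge(1/200-r\rho)(1/10\sqrt r-\rho)-r\rho$. Substituting $\rho=10^{-6}\cdot r^{-5/2}$ absorbs all error terms and yields a total bias exceeding $1/(2000\sqrt r)>1/6400\sqrt r$, contradicting the fairness of $\pi$ for infinitely many $\kappa\in\I$.

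The main technical hurdle I anticipate is the rigorous handling of term~$(\mathrm{I})$, namely converting the $\rho$-indistinguishability of \cref{claim:goodForcaster}, which is averaged over a uniform round index $I\in[r]$, into an expectation bound at the adversary-chosen stopping time $\tau^\ast$. The factor-$r$ loss incurred by ``conditioning on $I=\tau^\ast$'' in the distinguisher is precisely what dictates the quantitatively small choice $\rho=\Theta(r^{-5/2})$, so that all error contributions become $o(1/\sqrt r)$ and the $\Theta(1/\sqrt r)$ lower bound on~$(\mathrm{II})$ survives.
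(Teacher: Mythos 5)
Your proposal is correct and follows essentially the same route as the paper's proof: the contrapositive via a fail-stop attacker that simulates $\Fc$ and $\Gc$ and aborts on a detected gap, followed by the three-way bias decomposition bounded respectively by the forecaster's indistinguishability (\cref{claim:goodForcaster}), the accuracy of $\Gc$ (\cref{claim:computingGV}), and the Doob-martingale / optional-stopping identity $\ex{g(F_{\le\tau},S)}=\ex{C}=1/2$. The only (inessential) differences from the paper are a sign flip in the \wlg one-sided event, a slightly lower trigger threshold ($1/10\sqrt r$ rather than $1/8\sqrt r-\rho$), and that you compare directly against $g$ rather than inserting $\Gc$ as an intermediate term in the decomposition.
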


Loosely speaking, \cref{claim:BUappxGV} states that the expected output sequence and the forecaster's outputs are close for a fair protocol. If not, then either of the following attackers $\Pc^\ast_0$, $\Pc^\ast_1$ can bias the output of party $\Pc$: for fixed randomness $s\in \supp(S^\kappa)$, attacker $\Pc^\ast_z$ computes $f_{i}=\Fc(m_{\le i}, s)$ for partial transcript $m_{\le i}$ at round $i\in [r]$, and aborts as soon as $(-1)^{1-z}(\Gc(f^\pk_{\le i} , s) - f_i) \ge 1/8\sqrt{r}-\rho$. The desired bias is guaranteed by the accuracy of the forecaster (\cref{claim:goodForcaster}), the accuracy of algorithm $\Gc$ (\cref{claim:computingGV}) and the presumed frequency of occurrence of a suitable gap. The details of the proof are given in \cref{sec:BUappxGV}.

\paragraph{Expected outcome sequence  has large gap.}
Similarly to \cite{CleveI93}, the success of our attack depends on the occurrence of large gaps in the expected outcome sequence. The latter is guaranteed  by \cite{CleveI93} and \cite{BHMO18New}, since the expected outcome sequence is a suitable martingale.

\begin{claim}[Expected outcomes have large gap]\label{claim:GVjump}
	For every   $\kappa\in \N$, it holds that  $\pr{\exists i \in [r] \colon \size{g(F^\pk_{\le i} ,S^\pk) - g(F^\pk_{\le i-1} ,S^\pk)}\ge 1/4\sqrt{r}} >1/20$.
\end{claim}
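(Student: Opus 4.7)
The plan is to apply \cref{thm:MartingalesHaveJumpejumps} directly to the Doob martingale $X_i \eqdef g(F^\pk_{\le i}, S^\pk)$ for $i \in (r)$. This requires verifying three ingredients: that $X_0, \ldots, X_r$ is a martingale, that it starts at $1/2$, and that it ends in $\set{0,1}$ almost surely.

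First I would observe that the sequence is a Doob martingale with respect to the filtration generated by $(F^\pk_{\le i}, S^\pk)$. Indeed, $X_i = \ex{C^\pk \mid F^\pk_{\le i}, S^\pk}$ by definition of $g$, and since $X_{\le i}$ is a (measurable) function of $(F^\pk_{\le i}, S^\pk)$, tower property gives $\ex{X_{i+1}\mid X_{\le i}} = \ex{\ex{C^\pk \mid F^\pk_{\le i+1}, S^\pk} \mid X_{\le i}} = \ex{C^\pk \mid X_{\le i}} = \ex{\ex{C^\pk \mid F^\pk_{\le i}, S^\pk}\mid X_{\le i}} = X_i$. The range constraint $X_i \in [0,1]$ is immediate since $C^\pk \in \zo$.

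For the boundary conditions: $F^\pk_{\le 0}$ is the empty prefix, so $X_0 = \ex{C^\pk \mid S^\pk}$. Because $S^\pk$ is the independent randomness used by $\Fc$ and is independent of the honest execution of $\pi$, it follows that $X_0 = \ex{C^\pk} = 1/2$ by the uniformity property in \cref{def:FairCCF}. For $X_r$, recall the assumption (stated just after \cref{claim:goodForcaster}) that the common output appears in the final message of $\pi$ and that we may take $\Fc(m_{\le r};\cdot) = (b,b)$ for the bit $b$ implied by $m_{\le r}$. Hence $F^\pk_r$ deterministically determines $C^\pk$, so $X_r = C^\pk \in \zo$ with probability one.

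With all three hypotheses met, \cref{thm:MartingalesHaveJumpejumps} yields
\begin{align*}
\pr{\exists i \in [r] \st \size{X_i - X_{i-1}} \ge 1/4\sqrt{r}} \ge 1/20,
\end{align*}
which is exactly the stated bound after unpacking $X_i = g(F^\pk_{\le i}, S^\pk)$. There is no real obstacle here; the only non-routine points are making sure the boundary values are justified (via the independence of $S^\pk$ and the convention on $\Fc$ at the final round), which the setup above supplies.
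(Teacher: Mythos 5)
Your proof is correct and follows essentially the same route as the paper: view $g(F^\pk_{\le i}, S^\pk)$ as a Doob martingale, verify the boundary conditions $X_0 = 1/2$ and $X_r \in \zo$ (using the independence of $S^\pk$ and the convention on $\Fc$ at the last round), and invoke \cref{thm:MartingalesHaveJumpejumps}. Your explicit tower-property check that the Doob martingale satisfies the paper's \cref{def:DMartingales} (which is stated with respect to the sequence's own history rather than the finer filtration $(F^\pk_{\le i}, S^\pk)$) is a small but welcome clarification that the paper elides with the remark ``Doob (and hence, strong) martingale.''
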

\begin{proof}
	Consider the sequence of random variables $G_0^\pk,\ldots,G_r^\pk$ defined by 
	$G_i^\pk  = g(F^\pk_{\le i}, S^\pk)$. Observe that this is a Doob (and hence, strong) martingale sequence, \wrt the random variables $Z_0 = S^\pk$ and $Z_i = F^\pk_{i}$ for $i\in[r]$, and the function $f(S^\pk, F^\pk_{\le r})= g(F^\pk_{\le r}, S^\pk) = F^\pk_r[0]$ (\ie the function that outputs the actual output of the protocol, as implied by $F^\pk_r$). 
	Clearly,  $G_0^\pk = 1/2$ and $G_r^\pk\in \zo$ (recall that we assume that $\Fc(M_{\le r};\cdot) =  (b,b)$, where $b$ is the output bit as implied by $M_{\le r}$). Thus, the proof follows by \cref{thm:MartingalesHaveJumpejumps}.
\end{proof}

\paragraph{Independence of attack decision.}
\cref{claim:goodForcaster} immediately yields that the expected values of $F_{ i }$ and  $Z_{i}^\Pc$ are close, for both $\Pc \in \set{\Ac,\Bc}$ and every $i\in[r]$. Assuming io-key-agreement  does not exist, the following claim essentially states that $F_{ i }$ and  $Z_{i}^\Pc$ remain close in expectation, even if we condition on  some event that depends on the other party's next message.  This observation will allow us to show that, when a large gap in the expected outcome is observed by one of the parties, the (expected value of the) backup value of the other party still lags behind. The following claim captures the core of the novel idea in our attack, and its proof is the most technical aspect towards proving our main result. 

\def\claimAttackOpurunity
{
	Let  $\Cc$ be a single-bit  output \pptm. For $\pk \in \N$ and  $\Pc \in \set{\Ac,\Bc}$, let $E_1^{\Pc,\pk},\ldots,E_{r}^{\Pc,\pk}$ be the sequence of random variables defined by $E_i^{\Pc,\pk} =  \Cc(F^\pk_{\le i}, S^\pk)$  if $\Pc$ sends the \ith message in $\pi(1^\pk)$, and $E_i^{\Pc,\pk} =0$ otherwise.
	
	Assume io-key-agreement protocols do not exist. Then, for any $\Pc \in \set{\Ac,\Bc}$ and infinite subset $\I' \subseteq \I$, there exists an infinite set $\I'' \subseteq \I'$ such that 
	\begin{align*}
	\ex{  E_{i+1}^{\Pc,\pk} \cdot (Z_{i}^{\Po,\pk} -  F^{\Po,\pk}_{  i}) } \in \pm 4r\rho
	\end{align*} 
	for every   $\kappa\in \I''$ and  $i\in (r-1)$, where $\Po$ denotes (the party in) $\set{\Ac,\Bc} \setminus\set{\Pc}$.
}

\begin{claim}[Independence of attack decision]\label{claim:AttackOpurunity}
	\claimAttackOpurunity
\end{claim}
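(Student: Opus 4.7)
Fix $\Pc\in\{\Ac,\Bc\}$, an infinite subset $\I'\subseteq\I$, and $i\in(r-1)$. If $\Pc$ does not send the $(i+1)$-st message then $E_{i+1}^{\Pc,\pk}\equiv 0$ and the bound holds trivially, so by symmetry assume $\Pc=\Ac$ does send message $i+1$ (so $\Po=\Bc$). The strategy is to associate to this $i$ a \ppt no-input, single-bit-output two-party protocol $\hPi_i=(\hAc,\hBc)$ whose outputs are \emph{exactly} $E_{i+1}^{\Ac,\pk}$ and $Z_i^{\Bc,\pk}$, and then invoke the dichotomy of \cite{HNOSS18New} to enforce (approximate) conditional independence between them given the transcript. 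Concretely: $\hAc$ samples a uniform string $s$ of length $m(\pk)$ and sends it as the first message; the parties then simulate the first $i$ rounds of $\pi(1^\pk)$; $\hBc$ outputs its $i$-th backup value $z_i^{\Bc}$; and $\hAc$ internally continues $\Ac$'s computation for one round to obtain $m_{i+1}$, computes $f_{\le i+1}=\bigl(\Fc(m_{\le j};s)\bigr)_{j\in(i+1)}$, and outputs $\Cc(f_{\le i+1},s)$. The transcript of $\hPi_i$ is $T_i=(s,M_{\le i}^\pk)$, and (coupling $s$ with the external $S^\pk$) the parties' outputs coincide as random variables with $E_{i+1}^{\Ac,\pk}$ and $Z_i^{\Bc,\pk}$.

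\textbf{Combining the dichotomy with the forecaster.} Since io-key-agreement does not exist, applying the dichotomy to $\hPi_i$ with parameter $\rho$ yields a \ppt decorrelator $\hDcr_i$ and an infinite subset $\I''_i\subseteq\I'$ on which
\begin{equation*}
(E_{i+1}^{\Ac,\pk},Z_i^{\Bc,\pk},T_i)\cindist_{\I''_i,\rho}(U_{p^{\hAc}},U_{p^{\hBc}},T_i)_{(p^{\hAc},p^{\hBc})=\hDcr_i(T_i)},
\end{equation*}
with $U_{p^{\hAc}},U_{p^{\hBc}}$ drawn independently given $T_i$. On the other hand, \cref{claim:goodForcaster} applied at the fixed round $i$ (converting its uniform-$I$ statement into a per-$i$ statement costs a factor $r$ in the advantage via a standard averaging argument) gives $(Z_i^{\Bc,\pk},T_i)\cindist_{\I,r\rho}(U_{F_i^{\Bc,\pk}},T_i)$, using that $T_i=(s,M_{\le i}^\pk)$. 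Chaining this with the decorrelator's $\hBc$-marginal $(Z_i^{\Bc,\pk},T_i)\cindist_\rho(U_{p^{\hBc}},T_i)$ by triangle inequality, and then padding with the independent sample $U_{p^{\hAc}}$ produced by $\hDcr_i$ (a standard data-processing reduction) yields
\begin{equation*}
(E_{i+1}^{\Ac,\pk},Z_i^{\Bc,\pk},T_i)\cindist_{(r+2)\rho}(U_{p^{\hAc}},U_{F_i^{\Bc,\pk}},T_i),
\end{equation*}
where, crucially, the right-hand side is conditionally on $T_i$ a product of two independent Bernoullis: $F_i^{\Bc,\pk}$ is $T_i$-measurable and $U_{F_i^{\Bc,\pk}}$ is sampled afresh.

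\textbf{Reading off the expectation and aggregating.} Feeding the Boolean distinguisher $(x_1,x_2,t)\mapsto x_1\cdot x_2$ into the last indistinguishability, together with the identity $\ex{U_{p^{\hAc}}\cdot U_{F_i^{\Bc,\pk}}}=\ex{p^{\hAc}\cdot F_i^{\Bc,\pk}}$ (conditional independence), gives $\bigl|\ex{E_{i+1}^{\Ac,\pk}\cdot Z_i^{\Bc,\pk}}-\ex{p^{\hAc}\cdot F_i^{\Bc,\pk}}\bigr|\le(r+2)\rho$. Next, applying the decorrelator's $\hAc$-marginal $(E_{i+1}^{\Ac,\pk},T_i)\cindist_\rho(U_{p^{\hAc}},T_i)$ against the randomized distinguisher $(x_1,t)\mapsto x_1\cdot F_i^{\Bc,\pk}(t)$---which is \ppt since $\Fc$ is \ppt and $F_i^{\Bc,\pk}$ is computable from $T_i$---gives $\bigl|\ex{E_{i+1}^{\Ac,\pk}\cdot F_i^{\Bc,\pk}}-\ex{p^{\hAc}\cdot F_i^{\Bc,\pk}}\bigr|\le\rho$. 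A triangle inequality then yields $\bigl|\ex{E_{i+1}^{\Ac,\pk}\cdot(Z_i^{\Bc,\pk}-F_i^{\Bc,\pk})}\bigr|\le(r+3)\rho\le 4r\rho$ for every $\pk\in\I''_i$. To obtain a single $\I''$ valid for all (finitely many) rounds $i$ and both choices of $\Pc$, I take a nested chain $\I'\supseteq\I''_0\supseteq\cdots\supseteq\I''_{r-1}$ of infinite sets (applying the dichotomy within the preceding set at each step) and set $\I'':=\I''_{r-1}$.

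\textbf{Main obstacle.} The delicate point is aligning the decorrelator's backup-value estimate $p^{\hBc}$ with the forecaster's estimate $F_i^{\Bc,\pk}$: both are computationally close to $Z_i^{\Bc,\pk}$ given $T_i$, but only in the sense that the $\{0,1\}$-valued \emph{samples} $U_{p^{\hBc}}$ and $U_{F_i^{\Bc,\pk}}$ are indistinguishable, not that the real-valued parameters agree pointwise. The payoff is that this weaker closeness is \emph{exactly} what lets us pass to independence on the alternate side, since $F_i^{\Bc,\pk}$ is $T_i$-measurable and its Bernoulli is sampled afresh; managing this substitution cleanly (via the padding/data-processing step) is the technical heart of the argument. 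A secondary subtlety is that $\Fc$ was obtained by applying \cref{thm:Forcasters} to $\tPi$ (with its random-round prefix), so specializing to a particular $i$ loses a factor of $r$ in the advantage---this is precisely why the advertised bound is $4r\rho$ rather than $O(\rho)$.
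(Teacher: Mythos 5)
Your proof is correct and follows essentially the same strategy as the paper: construct an auxiliary two-party protocol whose outputs are $E_{i+1}^{\Pc}$ and $Z_i^{\Po}$, invoke \cref{thm:Decor} to obtain a decorrelator, and then use \cref{claim:goodForcaster} to trade the decorrelator's backup-value parameter for the forecaster's parameter $F_i^{\Po}$ -- which, being transcript-measurable, turns the two coordinates of interest into a conditional product. The main organizational difference is that the paper applies the dichotomy once, to a single protocol $\hPi$ that samples the round index $i$ uniformly as its first message, and only specializes to a fixed $i$ at the very end; you fix $i$ up front, define one protocol $\hPi_i$ per round, apply the dichotomy $r$ times, and nest the resulting infinite sets. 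Both routes work; yours makes the $r$-factor loss more transparent (it enters only through specializing the forecaster's random-$I$ guarantee), at the cost of $r$ dichotomy invocations and the nested-set bookkeeping, while the paper's single application is cleaner but its final ``and thus the per-$i$ bound is $4r\rho$'' step is terser. You also compress the paper's three sub-claims into a single indistinguishability chain followed by two product distinguishers; this is the same content repackaged.

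One point you should make explicit. For the padding/data-processing step -- replacing $U_{p^{\hBc}}$ by $U_{F_i^{\Bc,\pk}}$ while keeping $U_{p^{\hAc}}$ -- to be a valid reduction, the decorrelator's internal coin string $\Sh^\pk$ must be included in the exposed transcript, exactly as in the statement of \cref{thm:Decor}. Given $T_i$ alone, $U_{p^{\hAc}}$ and $U_{p^{\hBc}}$ are \emph{not} conditionally independent (they are correlated through $\Sh^\pk$), so a data-processor that reads only $(u^{\Bc}, T_i)$ and resamples $p^{\hAc}$ from fresh coins would not reproduce the decorrelator-side joint. Appending $\Sh^\pk$ to $T_i$ makes $p^{\hAc}$ a deterministic function of the visible data and the padding step goes through; this is also what justifies the phrase ``drawn independently given $T_i$.'' Relatedly, the ``randomized distinguisher'' $(x_1,t)\mapsto x_1\cdot F_i^{\Bc,\pk}(t)$ in your last step must have Boolean output; the intended reading is to sample $u\la U_{F_i^{\Bc,\pk}(t)}$ and output $x_1\cdot u$.
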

Since  $\ex{  E_{i+1}^{\Pc,\pk} \cdot( Z_{i}^{\Po,\pk} -  F^{\Po,\pk}_{  i} )} = \ex{E_{i+1}^{\Pc,\pk} \cdot \ex{Z_{i}^{\Po,\pk} - F_{i}^{\Pc,\pk} \mid E_{i+1}^{\Pc,\pk} =1}}$,   \cref{claim:AttackOpurunity}  yields that the expected values of $F_{ i }$ and  $Z_{i}^\Pc$ remain close, even when   conditioning  on a likely enough event over the next message of $\Pc$.

The proof of \cref{claim:AttackOpurunity} is given in \cref{sec:AttackOpurunity}. In essence, we use the recent dichotomy of \citet{HNOSS18New} to assert that if io-key-agreement does not exist, then the values of  $E_{i+1}^{\Pc,\pk}$ and  $Z^{\Po,\pk}_{  i}$ conditioned on $M_{\le i}$ (which determines the value of $F^{\Po,\pk}_{ i}$), are (computationally) close to be in a product distribution.

\paragraph{Putting everything together.}
Equipped with   the above observations, we  prove  \cref{thm:main}.
\begin{proof}[Proof of \cref{thm:main}]
	
	Let $\pi$ be an $\eps =  \frac{1}{25600\sqrt{r}}$-fair coin flipping protocol. By \cref{claim:BUappxGV,claim:GVjump},  we can  assume \wlg that there exists an infinite subset $\I' \subseteq \I$ such that  
	\begin{align}\label{final:0}
	\pr{\exists i \in [r] \colon      \mbox{ $\Ac$ sends  \ith message in $\pi(1^\pk)$} \land g(F^\pk_{\le i} ,S^\pk) - F^{\Bc,\pk}_{i-1} \ge \frac{1}{8\sqrt{r}}} \ge \frac{1}{80} -\frac{1}{100} = \frac{1}{400}
	\end{align}
	We define the  following   \ppt fail-stop attacker $\As$ taking the role of  $\Ac$ in $\pi$. We will show below that assuming io-key-agreement do not exist, algorithm $\As$ succeeds in biasing the output of $\Bc$ towards zero by $\eps$  for all $\pk \in \I''$, contradicting the presumed fairness of $\pi$. 
	
	 In the following, let $\Gc$ be the \pptm guaranteed to exist by \cref{claim:computingGV}.
	
	\begin{algorithm}[\As]\label{algo:FinallAttack} 
		\item Input: security parameter $1^\kappa$.
		
		\item Description: 
		\begin{enumerate}
			
			\item Sample $\rdm\la S^\pk$ and start a random execution of $\Ac(1^\pk)$.
			
			\item  Upon  receiving the  $(i-1)$ message $m_{i-1}$, do
			
			\begin{enumerate}
				\item Forward $m_{i-1}$ to $\Ac$, and let $m_i$ be the next message  sent by $\Ac$. 
				
				\item Compute $f_{i}= (f_{i}^\Ac,f_{i}^\Bc) =\Fc(m_{\le i}, \rdm)$.

				\item Compute  $\wt{g}_i = \Gc( f_{\le i}, \rdm)$.

				\item If $\wt{g}_i\ge f_{i-1}^\Bc + 1/16\sqrt{r}$,   abort (without sending further messages).

				Otherwise, send $m_i$ to $\Bc$ and proceed to the next round. 
				
			\end{enumerate} 
			
		\end{enumerate}
	\end{algorithm}  
	
	It is clear that $\As$ is a \pptm.  We conclude the proof  showing that  assuming io-key-agreement do not exist, $\Bc$'s output when interacting with  $\As$ is  biased towards zero by at least  $\eps$.

	The following random variables are defined \wrt a a random execution of  $(\As,\Bc)(1^\pk)$. Let $S^\pk$ and $F^\pk = (F_1^\pk,\ldots,F^\pk_r)$ denote the values of $s$ and $f_1,\ldots,f_r$ sampled by $\As$. Let $Z^{\Bc,\pk} = (Z_1^{\Bc,\pk},\ldots,Z^{\Bc,\pk}_r)$ denote the backup values computed by  $\Bc$. For $i\in [r]$, let $E_i^\pk$ be the event  that  $\As$ decides to abort in round $i$.  Finally, let  $J^\pk $ be the index $i$ with $E_i^\pk =1$, setting it to $r+1$ if no such index exist. Below, if we do not quantify over $\kappa$, it means that the statement holds for any $\kappa\in \N$.
	
	By \cref{claim:computingGV,final:0},
	\begin{align}\label{eq:final1}
	\pr{J^\pk \neq r+1}>  \frac{1}{400} - \rho \ge \frac{1}{800} 
	\end{align}
 for every $\pk \in \I'$. Where since the events $E_i^\pk$ and $E_j^\pk$  for $i\neq j$ are disjoint, 
	\begin{align}\label{eq:final3}  
	\ex{Z_{J^\kappa-1}^{\Bc,\pk} - F_{J^\kappa-1}^{\Bc,\pk}}&=  \ex{\sum_{i=1}^{r+1} E_i^\pk \cdot (Z_{i-1}^{\Bc,\pk} -  F_{i-1}^{\Bc,\pk} )}\\
	& = \sum_{i=1}^{r+1} \ex{E_i^\pk \cdot (Z_{i-1}^{\Bc,\pk} -  F_{i-1}^{\Bc,\pk} )}\nonumber\\
	&= \sum_{i=1}^{r} \ex{E_i^\pk \cdot (Z_{i-1}^{\Bc,\pk} -  F_{i-1}^{\Bc,\pk} )}.\nonumber
	\end{align}
	The last inequality holds since the protocol's output appears in the last message, by  assumption, and thus \wlg $Z_{r}^{\Bc,\pk} = F_{r}^{\Bc,\pk}$.  Consider the  single-bit  output \pptm $\Cc$  defined as follows: on input $(f_{\le i} =((f_1^\Ac,f_1^\Bc), \ldots,(f_i^\Ac,f_i^\Bc)),s)$, it outputs $1$ if $ \Gc(f_{\le i}, s) -  f^{\Bc}_{ i-1}  \ge  1/16\sqrt{r}$, and $ \Gc(f_{\le j}, s) -  f^{\Bc}_{ j-1}  <  1/16\sqrt{r}$ for all $j<i $; otherwise, it outputs  zero. Observe that  $E_i^\pk$ is the indicator of the event $\Ac$ sends the \ith message in $\pi(1^\kappa)$ and  $\Cc(F^\pk_{\le i}, S^\pk)=1$, for any fixing of  $(F^\pk,S^\pk, Z^{\Bc,\pk})$. Thus, assuming io-key-agreement protocols do not exist,   \cref{claim:AttackOpurunity} yields that that there exists an infinite set $\I'' \subset \I'$ such that 
	\begin{align}\label{eq:final4} 
	\ex{E_{i+1}^\pk\cdot (Z_i^{\Bc,\pk} - F_{i}^{\Bc,\pk})} \in \pm 4r\rho
	\end{align}
	for every $\kappa \in \I''$ and  $i\in [r-1]$. Putting together  \cref{eq:final3,eq:final4}, we conclude that 
	\begin{align}\label{eq:final5}  
	\ex{Z_{J^\kappa-1}^{\Bc,\pk} - F_{J^\kappa-1}^{\Bc,\pk}} \in \pm 4r^2\rho
	\end{align}
	for every $\kappa \in \I''$.

	Recall that our goal is to show that  $\ex{Z_{J^\kappa-1}^{\Bc,\pk}} $ is significantly smaller than $1/2$.  We  do it by showing that it is significantly smaller than $\ex{g(F^\pk_{\le J^\pk}, S^\pk)}$ which equals $1/2$, since, by tower law (total expectation),
	
		\begin{align}\label{eq:final6}
		\ex{g(F^\pk_{\le J^\pk}, S^\pk)}&=   \ex{C^\pk} =1/2 .
	\end{align}
	 Finally, let $G_i$ be the value of $ \Gc(F_{\le i},S^\pk)$ computed by $\As$ in the execution of  $(\As,\Bc)(1^\pk)$ considered above, letting $G_{r+1}= g(F^\pk_{\le r+1},S^\pk)$.
	 \cref{claim:computingGV} yields that
	\begin{align}\label{eq:final7}
	\ex{g(F^\pk_{\le J^\pk}, S^\pk)-G_{J^\pk}}\le 2r\rho
	\end{align}
	Putting all the above observations together, we conclude that, for every $\kappa \in \I''$,
	\begin{align*}
	& \lefteqn{\ex{Z_{J^\kappa-1}^{\Bc,\pk}} }\\ 
	&= \ex{g(F^\pk_{\le J^\pk}, S^\pk)} - \ex{G_{J^\pk}- F_{J^\kappa-1}^{\Bc,\pk}} + \ex{Z_{J^\kappa-1}^{\Bc,\pk} - F_{J^\kappa-1}^{\Bc,\pk}} -\ex{g(F^\pk_{\le J^\pk}, S^\pk)-G_{J^\pk}}  \\
	& \le \frac{1}{2} - \ex{G_{J^\pk}-F_{J^\kappa-1}^{\Bc,\pk}\mid J^\kappa \neq r+1}\cdot \pr{J^\pk\neq r+1} + 4r^2\rho +2r\rho \\
	& \le \frac{1}{2} - (1/16\sqrt{r}) \cdot (1/800) + 4r^2\rho +2r\rho \\
	& < \frac{1}{2} - \frac{1}{25600\sqrt{r}}.
	\end{align*}
	The first inequality holds by \cref{eq:final6,eq:final5,eq:final7}.
	The second inequality holds  by the definition of $J^\pk$ and   
	\cref{eq:final1}. The last inequality holds by our choice of $\rho$.

\end{proof}


\subsection{Approximating the Expected Outcome Sequence}\label{sec:computingGV}

In this section we prove \cref{claim:computingGV}, restated below.
\begin{claim}[\cref{claim:computingGV}, restated]\label{claim:computingGVRes}
	\claimcomputingGV
\end{claim}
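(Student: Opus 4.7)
The plan is to estimate $g(f_{\le i}, s) = \Pr\!\left[C^\pk = 1 \mid F^\pk_{\le i} = f_{\le i}, S^\pk = s\right]$ by rejection sampling, using the given input $s$ consistently across samples. On input $(f_{\le i}, s)$, algorithm $\Gc$ runs $T$ independent executions of $\pi(1^\kappa)$, producing transcript/output pairs $(M^{(t)}, C^{(t)})$ for $t \in [T]$; it then uses the supplied $s$ to compute $f^{(t)}_j := \Fc(j, M^{(t)}_{\le j}; s)$ for every $j \le i$. Letting $A := \size{\{t : f^{(t)}_{\le i} = f_{\le i}\}}$ and $B := \size{\{t : f^{(t)}_{\le i} = f_{\le i} \text{ and } C^{(t)} = 1\}}$, algorithm $\Gc$ outputs $B/A$ if $A > 0$, and $1/2$ otherwise. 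This is a \pptm\ because $\pi$ and $\Fc$ are both polynomial time, $T$ is polynomially bounded (see below), and $r$ is constant.

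The key step is to bound the mass of inputs on which the conditioning event is too rare for the estimate to be accurate. By \cref{thm:Forcasters}, $\Fc$ has constant output length, so for fixed $s$ the forecasted vector $F^\pk_{\le i}$ takes at most some constant number $K = K(r,\rho)$ of distinct values. Call a pair $(f_{\le i}, s)$ \emph{bad} if $p(f_{\le i}, s) := \Pr\!\left[F^\pk_{\le i} = f_{\le i} \mid S^\pk = s\right] < \beta$, where $\beta := \rho/(2K)$. For each fixed $s$, at most $K$ distinct values of $f_{\le i}$ satisfy this inequality, each contributing mass at most $\beta$, so the total mass of bad inputs under $(F^\pk_{\le i}, S^\pk)$ is at most $K\beta = \rho/2$. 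For these inputs we simply accept failure.

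For good inputs (those with $p(f_{\le i}, s) \ge \beta$), choose $T = \Theta\!\left(\log(1/\rho) / (\beta \rho^2)\right)$, which is polynomial in the input length because $K$ and $\rho$ are constants. Since each sample is independent of $s$ (the input $s$ only enters the post-processing via $\Fc$), Hoeffding's inequality gives
\begin{align*}
\pr{\size{A/T - p(f_{\le i}, s)} \ge \beta\rho/2} \le \rho/4 \quad \text{and} \quad \pr{\size{B/T - p(f_{\le i}, s)\, g(f_{\le i}, s)} \ge \beta\rho/2} \le \rho/4.
\end{align*}
Conditioned on both estimates being accurate, a routine calculation using $p(f_{\le i}, s) \ge \beta$ yields $\size{B/A - g(f_{\le i}, s)} \le \rho$. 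Combining the $\rho/2$ mass of bad inputs with the $\rho/2$ estimation-failure probability on good inputs yields the required $\rho$-bound on the total failure probability. The only mild obstacle is bookkeeping of the parameters $K$, $\beta$, and $T$, which remain constant because $r$ is constant; nothing here relies on io-key-agreement, only on the constant-size support of the forecaster.
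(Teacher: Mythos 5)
Your proposal takes essentially the same approach as the paper: both use rejection sampling (run many independent executions of $\pi$, filter to those whose forecaster outputs match $f_{\le i}$, and estimate $g$ by the empirical fraction outputting $1$), and both handle the rare-conditioning problem by observing that the forecaster's constant output length makes the support of $F^\pk_{\le i}$ constant-size, so that values with probability below a threshold contribute total mass at most $\rho/2$. The only wrinkle is minor constant bookkeeping: with the stated deviation $\beta\rho/2$ and $p\ge\beta$, the routine calculation for $\size{B/A - g}$ gives roughly $2\rho$ rather than $\rho$, and the corresponding $T$ should scale like $\log(1/\rho)/(\beta^2\rho^2)$ rather than $\log(1/\rho)/(\beta\rho^2)$; tightening $\epsilon$ to $\beta\rho/4$ (say) fixes both, matching the kind of constant choices the paper makes.
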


The proof of \cref{claim:computingGVRes} is straightforward. Since  there are only constant number of rounds and  $\Fc$ has constant output-length, when fixing the randomness of \Fc,  the domain of $\Gc$ has constant size. Hence, the value of of $g$ can be approximated well via sampling. Details below.


Let $c$ be a bound on the number of possible outputs of $\Fc$ (recall that $\Fc$ has constant output-length). We are using the following implementation for $\Gc$.

\newcommand{\Fo}{\overline{\Fc}}
In the following, let $\Fo((m_1,\ldots,m_i);s) = (\Fc(m_1;s),\ldots , (\Fc(m_i;s))$ (\ie $\Fo(M_{\le i};S^\pk) = F_{\le i}$).
\begin{algorithm}[$\Gc$] \label{algo:allgamma} 
	\item Parameters: $v = \ceil{\frac{1}{2}\cdot \left (\frac{2c^{r}}{\rho}\right )^4 \cdot \ln\left (\frac{8}{\rho}\right )}$.  
	\item Input:  $f_{\le i}\in\supp(F_{\le i}^{\pk})$ and $s\in \Supp(S^\pk)$.
	\item Description:
	\begin{enumerate}
		\item Sample $v$  transcripts $\set{m^j,c^j}_{ j \in [v]}$  by taking the (full) transcripts and outputs of  $v$ independent executions  of  $\pi(1^{\pk})$. 
		
		\item For every $j\in [v]$ let  $f^j_i=\Fo(m_{\le i}^j;s)$.
		
		\item Let  $q=\size{\set{j \in [v] \colon f_{\le i}^j=f_{\le i}}}$ and $p=\size{\set{j \in [v] \colon f_{\le i}^j=f_{\le i}  \land c^j=1}}$.
		
		\item Set $\wt{g}=p/q$. (Set  $\wt{g}=0$ if $q=p = 0$.)
		
		\item Output $\wt{g}$.
	\end{enumerate} 
\end{algorithm}

\begin{remark}[A more efficient approximator.] 
	The running time of algorithm  $\Gc$ above is exponential in $r$. While this does not pose a problem for our purposes here, since $r$ is constant, it might leave the impression that out approach cannot be extended to protocols with super-constant round complexity.  So it is worth mentioning that the running time of \Gc can be reduced to be polynomial in $r$, by using the augmented weak martingale paradigm of \citet{BHMO18New}. Unfortunately, we currently cannot benefit from this improvement, since the result of \cite{HNOSS18New} only guarantees  indistinguishablity for constant $\rho$, which makes it useful only for attacking constant-round  protocols.
\end{remark}

We prove \cref{claim:computingGVRes} by showing that the above algorithm approximates $g$ well.

\begin{proof}[Proof of \cref{claim:computingGVRes} ] 
	
	To prove the quality of $\Gc$ in approximating  $g$, it suffices to prove the claim for every every $\kappa\in\N$, $i\in [r]$ and  fixed $\rdm\in \supp(S^\pk)$. That is 
	\begin{align}\label{eqn:gooal}
	\pr{\size{g(\Fo(M_{\le i},\rdm), \rdm) - \Gc(\Fo( M_{\le i},\rdm), \rdm) }\ge \rho } \le {\rho},
	\end{align}
	 where the probability is also taken over the random coins of $\Gc$.
	

	Fix $\kappa \in \N$ and omit it from the notation, and fix $i\in [r]$ and $s\in S^\pk$. Let $\cD_i  = \set{f_{\le i} \colon \pr{\Fo(M_{\le i},\rdm)=f_{\le i}}\ge \rho/2c^r}$.   By Hoeffding's inequality \cite{Hoeffding63}, for every $f_{\le i} \in \cD$,  it holds that

	\begin{align}
	\pr{\size{ g(f_{\le i}, \rdm)- \Gc( f_{\le i}, \rdm)} \ge \rho }&\le 4\cdot \exp\left (-2\cdot v\cdot \left (\rho/2c^r\right )^4\right )\\
	& \le  4\cdot \exp\left (-\frac{v \rho^4}{8c^{4r}}\right )\nonumber\\
	&\le  \rho/2.\nonumber
	\end{align}
	  \Nnote{There are a few missing steps above: We are applying Hoefding to the numerator and denominator of a fraction. Namely, $\pr{ \size{ \ldots - q/v} \ge (\rho/2c^r)^2 }  \le 2e^{-2v(\rho/2c^r)^4}$ and $\pr{ \size{ \ldots - p/v}\ge (\rho/2c^r)^2 }\le  2e^{-2v(\rho/2c^r)^4}$. By union bound and the approximation $1/(1+x)\in 1\pm 2x$, it follows that $4e^{-2v(\rho/2c^r)^4}\ge \pr{ \size{ g(f_{\le i}, \rdm) - p/q}\ge 3\cdot (\rho/2c^r)}\ge \pr{ \size{ g(f_{\le i}, \rdm) - p/q}\ge \rho}$ -- QED}
	It follows that
	\begin{align*}
	\lefteqn{\pr{\size{ g(\Fo(M_{\le i},\rdm), \rdm)- \Gc(\Fo(M_{\le i},\rdm), \rdm)} \ge \rho }}\\
	&\le \pr{(\Fo(M_{\le j},\rdm)\notin \cD} +  \rho/2\\
	&\le  \size{\Supp(\Fo(M_{\le j},\rdm))} \cdot\rho/2c^r   +  \rho/2\\
	&\le  c^r \cdot \rho /2c^r   +    \rho/2=  \rho.
	\end{align*}
\end{proof}

\subsection{Forecasted Backup Values are Close to Expected Outcome Sequence}\label{sec:BUappxGV}
In this section, we prove \cref{claim:BUappxGV} (restated below).
\begin{claim}[\cref{claim:BUappxGV}, restated]\label{claim:BUappxGVRes}
	\claimBUappxGV
\end{claim}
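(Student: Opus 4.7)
The plan is to derive a contradiction with the $\frac{1}{6400\sqrt{r}}$-fairness of $\pi$. Assume the claim fails, so for some $\Pc \in \set{\Ac,\Bc}$ and infinitely many $\kappa \in \I$, the event $|G_i - F_i^{\Pc,\pk}| \ge \frac{1}{8\sqrt{r}}$ holds for some $i \in [r]$ with probability at least $\frac{1}{100}$, where I abbreviate $G_i = g(F_{\le i}^\pk, S^\pk)$. A pigeonhole on the sign of the gap then fixes $z \in \zo$ such that $(-1)^{1-z}(G_i - F_i^{\Pc,\pk}) \ge \frac{1}{8\sqrt{r}}$ occurs for some $i$ with probability at least $\frac{1}{200}$. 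The rest of the plan is to amplify this distributional gap into a $\frac{1}{6400\sqrt{r}}$ bias on $\Pc$'s output, thereby violating fairness.

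Building on this, I would design a \pptm adversary $\bar{\Pc}^\ast$ controlling party $\bar{\Pc}$ as follows: it samples $s \la S^\pk$ at the outset and runs $\bar{\Pc}$ honestly while, at each round $i$, computing $f_i = \Fc(M_{\le i}; s)$ and $\wt g_i = \Gc(f_{\le i}, s)$, and aborting at the first round where $(-1)^{1-z}(\wt g_i - f_i^{\Pc}) \ge \frac{1}{8\sqrt{r}} - 2\rho$ -- thereby forcing $\Pc$ to output the backup $Z_i^{\Pc,\pk}$. Let $J$ denote this abort round (set $J = r+1$ if no abort fires) and $J' = \min(J, r)$; since the common output appears on the last message of $\pi$, $\Pc$'s output equals $Z_{J'}^{\Pc,\pk}$ in either case. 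The bias analysis combines four ingredients. First, a union-bound application of \cref{claim:computingGVRes} gives $|\wt g_i - G_i| \le \rho$ simultaneously over $i$ except on a set of probability at most $r\rho$; hence $\pr{J \le r} \ge \frac{1}{200} - r\rho \ge \frac{1}{400}$ and, on that event, $(-1)^{1-z}(G_J - F_J^{\Pc,\pk}) \ge \frac{1}{8\sqrt{r}} - 3\rho$. Second, since $G_0, \ldots, G_r$ is a Doob martingale and $J'$ is a bounded stopping time for it, optional stopping gives $\Ex[G_{J'}] = \frac{1}{2}$. Third, since the last message carries the common output we have $G_r = F_r^{\Pc,\pk} = C^\pk$, so the $J' = r$ summand of $\Ex[G_{J'} - F_{J'}^{\Pc,\pk}]$ contributes only through abort-triggered rounds. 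Fourth, \cref{claim:goodForcaster}, applied separately to each of the $r$ possible stopping values at an $r$-factor cost for converting the uniform-$I$ guarantee into a per-round one, yields $|\Ex[Z_{J'}^{\Pc,\pk}] - \Ex[F_{J'}^{\Pc,\pk}]| = O(r^2 \rho)$. Putting everything together produces $(-1)^{1-z}(\Ex[Z_{J'}^{\Pc,\pk}] - \frac{1}{2}) \le -\frac{1}{3200\sqrt{r}} + O(r^2 \rho)$, which, for $\rho = 10^{-6} r^{-5/2}$, is smaller than $-\frac{1}{6400\sqrt{r}}$, contradicting fairness.

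The hardest step will be the fourth ingredient: converting the forecaster's uniform-$I$ guarantee of \cref{claim:goodForcaster} into a tight estimate on $\Ex[Z_{J'}^{\Pc,\pk}] - \Ex[F_{J'}^{\Pc,\pk}]$, where $J'$ is a stopping time ranging over $r$ values and each value yields a distinct efficient indistinguishability test of the forecaster. Naive bookkeeping pays one factor of $r$ to pass from uniform-$I$ to a fixed round and another to sum over the $r$ possible stopping values, which explains the polynomial-in-$r$ decay of $\rho$ needed to ensure $r^2 \rho = o(1/\sqrt{r})$. A secondary subtlety is that the adversary can realize the backup $Z_i^{\Pc,\pk}$ only at rounds where $\Pc$ just sent a message; I would handle this either by restricting the abort trigger to such rounds (a constant-factor loss in probability absorbed by the slack between $\frac{1}{100}$ and $\frac{1}{200}$), or by appealing to the protocol model in which the adversary may truncate the execution after any round.
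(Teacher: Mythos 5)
Your proposal is correct and follows essentially the same route as the paper's proof: build a fail-stop adversary that samples the forecaster randomness, triggers abort on a gap between the $\Gc$-approximated expected-outcome sequence and the forecaster's backup estimate, and bound the honest party's output via the decomposition around $\ex{g(F_{\le J},S)}=\tfrac{1}{2}$ (your optional-stopping step is the paper's tower-law step). The only substantive difference is your fourth ingredient, which pays $O(r^2\rho)$ by applying \cref{claim:goodForcaster} once per possible stopping value; the paper instead exhibits a single distinguisher that, on input $(z^{\Ac},z^{\Bc},m_{\le I},s)$, recomputes the stopping condition from the prefix and outputs $z^{\Bc}$ precisely when $I$ equals the stopping round, giving $\ex{Z_J^{\Bc}-F_J^{\Bc}}\le r\rho$ in one shot -- but with $\rho=10^{-6}r^{-5/2}$ either bound yields the required contradiction.
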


\begin{proof}
	
	Assume the claim does not holds for $\Pc = \Bc$ and infinitely many security parameters $\I$ (the case $\Pc=\Ac$ is proven analogously). That is, for all $\kappa \in \I$ and \wlg, it holds that 
	\begin{align}\label{eqn:badbakop}
	\pr{\exists  i \in [r] \st  g(F^\pk_{\le i} ,S^\pk) - F_i^{\Bc, \pk} \ge \frac{1}{8\sqrt{r}}}\ge  \frac{1}{200}
	\end{align}
	
	Consider  the following   \ppt fail-stop attacker $\As$ taking the role of  $\Ac$ in $\pi$ to bias the output of $\Bc$ towards zeros.
	
	\begin{algorithm}[\As]\label{algo:simpatak}  
		\item Input: security parameter $1^\kappa$.
		
		\item Description: 
		\begin{enumerate}
			
			\item Samples $\rdm\la S^\pk$ and start a random execution of $\Ac(1^\pk)$.
			
			\item For $i=1\ldots r$:
			
			After sending (or receiving) the prescribed message $m_i$:
			\begin{enumerate}
				\item Let $f_i=\Fc(m_{\le i}; \rdm)$ and $\mu_i=\Gc(f_{\le i}, \rdm)- f_i$.
				
				\item Abort if $\mu_i\ge \frac{1}{8\sqrt{r}} - \rho$  (without sending further messages).

				Otherwise, proceed to the next round. 
				
			\end{enumerate} 
			
		\end{enumerate}
	\end{algorithm}  
	
	In the following, we fix a large enough $\kappa\in \I$ such that \cref{eqn:badbakop} holds, and we omit it from the notation when the context is clear. We show that algorithm $\As$ biases the output of $\Bc$  towards zero by at least $1/(6400\sqrt{r})$.

	We associate the  following random variables with a random execution of $(\As,\Bc)$. Let $J$ denote the index where the adversary aborted,  \ie the smallest $j$ such that $\Gc(F_{\le j}, S)-F^{\Bc}_{j}\ge \frac{1}{8\sqrt{r}}-\rho$, or ${J}=r$ if no abort occurred. The following expectations are taken over $(F_{\le i}, S)$ and the random coins of $\Gc$. We bound $\ex{Z^{\Bc}_{J}}$, i.e.~the expected output of the honest party. 
	\begin{align}\label{eqn:basis}
	\lefteqn{\ex{Z^{\Bc}_{J}}}\\
	&   = \ex{Z^{\Bc}_{J} } + \ex{g(F_{\le {J}}, S) } - \ex{g(F_{\le {J}}, S) } + \ex{\Gc(F_{\le {J}}, S) -F^{\Bc}_{{J}} } - \ex{\Gc(F_{\le {J}}, S) - F^{\Bc}_{{J}} } \nonumber\\
	&  = \ex{g(F_{\le {J}}, S) }- \ex{\Gc(F_{\le {J}}, S) - F^{\Bc}_{{J}} } + \ex{\Gc(F_{\le {J}}, S) - g(F_{\le {J}}, S) } + \ex{Z^{\Bc}_{J} - F^{\Bc}_{{J}} }  \nonumber\\
	&  = \frac{1}{2}-\ex{\Gc(F_{\le {J}}, S) - F^{\Bc}_{{J}} }  + \ex{\Gc(F_{\le {J}}, S) - g(F_{\le {J}}, S) } + \ex{Z^{\Bc}_{J} - F^{\Bc}_{{J}} }.\nonumber
	\end{align}
	The last equation follows from $\ex{g(F_{\le {J}}, S)}=\ex{C}$ and thus $\ex{g(F_{\le {J}}, S) }=\frac{1}{2}$ (for a more detailed argument see \cref{eq:final6} and preceding text). We  bound each of the terms above separately. First, observe that 
	\begin{align}
	\lefteqn{\pr{{J}\neq r}}\\
	&\ge \pr{\left (\forall i\in [r]\colon \size{\Gc(F_{\le i}, S) -g(F_{\le i}, S)}\le\rho  \right ) \land\left (\exists j\in [r]\colon g(F_{\le j}, S)- F_j^{\Bc} \ge \frac{1}{8\sqrt{r}}\right )} \nonumber\\
	& \ge \pr{\exists j\in [r] \colon g(F_{\le j}, S)- F_j \ge \frac{1}{8\sqrt{r}} } - \pr{\exists i\in [r]\colon \size{\Gc(F_{\le i}, S) -g(F_{\le i}, S)}>\rho  } \nonumber\\
	& \ge \frac{1}{200} - \rho \nonumber\\
	&\ge \frac{1}{400} .\nonumber
	\end{align}
	The penultimate  inequality is by  \cref{eqn:basis,claim:computingGV}. It follows that  
	\begin{align}\label{eqn:fingap}
	\ex{g(F_{\le {J}}, S) - F^{\Bc}_{J} }&= \pr{{J}\neq r}\cdot \ex{g(F_{\le {J}}, S) - F^{\Bc}_{J}  \mid J \ne r}  \\
	&\ge    \frac{1}{400} \cdot  \left (\frac{1}{8\sqrt{r}} - \rho\right ) - \ex{\Gc(F_{\le {J}}, S) -g(F_{\le {J}}, S) }\nonumber\\
	&\ge  \frac{1}{400} \cdot  \frac{1}{8\sqrt{r}}  - 3\rho. \nonumber
	\end{align}
	The penultimate  inequality is by  \cref{claim:computingGV}. Finally,  since we were taking $\kappa$ large enough, \cref{claim:goodForcaster} and a data-processing argument yields that
	\begin{align}\label{eqn:err2}
	\ex{Z^{\Bc}_{J} -F^{\Bc}_{J} }\le r\rho
	\end{align}
	\Nnote{FULL DETAILS of \cref{eqn:err2}: if not true, take a distinguisher $\di$ that checks whether $I=J$ and outputs the second bit. More formally, on input $(\cdot, z, (m_{\le i},s)) $, algorithm $\di$ operates as follows: (1) it computes $f_{j}=\Fc(m_{\le j},s)$ and $\wt{g}_j=\Gc(f_\le j, s)$, for every $j\le i$ and (2)  outputs $z$ if $\wt{g}_i-f_i\ge 1/8\sqrt{r}-\rho$ and $\wt{g}_j-f_j< 1/8\sqrt{r}-\rho$ for all $j<i$ (3) outputs $\perp$ otherwise. If $\ex{Z^{\Bc}_{J} -F^{\Bc}_{J} }> r\rho$, then $\pr{\di((Z_I^{\Ac,\pk} ,Z_I^{\Bc,\pk}  , M^\pk_{\le I},S^\pk))=1}-\pr{\di((U_{F_I^{\Ac,\pk}},U_{F_I^{\Bc,\pk}} , M^\pk_{\le I},S^\pk))=1} =\ex{Z^{\Bc}_{J}  -F^{\Bc}_{J}  }\cdot \pr{J=I}>\rho$  }   
	
	We conclude that $\ex{g(F_{\le {J}}, S) - F^{\Bc}_{J} } \ge  \frac{1}{400} \cdot  \frac{1}{8\sqrt{r}}  - (r+ 3)\rho> 1/(6400\sqrt{r})$, in contradiction to the assumed fairness of $\pi$.

\end{proof}

\newcommand{\Sh}{\widehat{S}}

\subsection{Independence of Attack Decision}\label{sec:AttackOpurunity}
In this section, we prove \cref{claim:AttackOpurunity} (restated below).
\begin{claim}[\cref{claim:AttackOpurunity}, restated]\label{claim:AttackOpurunityRes}
	\claimAttackOpurunity
\end{claim}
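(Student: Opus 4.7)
The plan is to apply the dichotomy theorem of \cite{HNOSS18New} (\cref{thm:intro:DecorolatorInf}) to a carefully constructed auxiliary protocol, and use the forecaster $\Fc$ of \cref{claim:goodForcaster} to pin down the decorrelator's marginal on the backup value. Fix $\Pc \in \set{\Ac,\Bc}$; I describe the case $\Pc = \Ac$ (the other is symmetric). For each $i \in (r-1)$ I define an auxiliary protocol $\hPi_i = (\hAc,\hBc)$: the parties first share a random string $s$ for $\Fc$ (say $\hAc$ sends it in the clear), they execute $\pi(1^\kappa)$ for $i$ rounds, and if $\Pc$ is the next sender then $\Pc$ (acting as $\hAc$) computes $m_{i+1}$ internally and outputs $E_{i+1}^{\Pc,\kappa} = \Cc((\Fc(m_{\le j};s))_{j\le i+1},s)$ (and $0$ otherwise), while the other party (as $\hBc$) outputs $Z_i^{\Po,\kappa}$. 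To avoid shrinking the set of ``good'' security parameters across the $r$ values of $i$, I combine these into a single $\hPi$ by having $\hAc$ first sample $i\la(r-1)$ and send it before running $\hPi_i$.

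Applying \cref{thm:intro:DecorolatorInf} to $\hPi$ with a small enough constant indistinguishability parameter $\rho'$, and using the assumed nonexistence of io-key-agreement, yields an efficient decorrelator $\hDcr$ and an infinite subset $\I''\subseteq\I'$ such that, for all $\kappa\in \I''$,
\[(E_{i+1}^{\Pc,\kappa}, Z_i^{\Po,\kappa}, T^\kappa) \cindist_{\rho', \I''} (U_{p^{\Ac}}, U_{p^{\Bc}}, T^\kappa)_{(p^{\Ac},p^{\Bc})\la\hDcr(T^\kappa)},\]
where $T^\kappa$ denotes the transcript of $\hPi$, containing $(i,s,m_{\le i})$. (The finitely many choices of $\Pc$ are handled by intersecting the two infinite sets obtained.)

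The central, and most delicate, step is to show that the $\hBc$-marginal produced by $\hDcr(T^\kappa)$ must coincide, up to additive error $O(\rho')$, with the forecaster's prediction $F_i^{\Po,\kappa}$. Both $\hDcr$ (second component) and $\Fc$ (the $\Po$-component) are supposed to be accurate Bernoulli-parameter estimators for $Z_i^{\Po,\kappa}$ given $(i,s,m_{\le i})$; if they disagreed on average by noticeably more than $\rho'$, then the more accurate one could be used to distinguish a Bernoulli sample drawn from the less accurate one from the real $Z_i^{\Po,\kappa}$, contradicting either the dichotomy guarantee or \cref{claim:goodForcaster}. Having replaced the $\hBc$-marginal by the forecaster's prediction I obtain
\[(E_{i+1}^{\Pc,\kappa}, Z_i^{\Po,\kappa}, T^\kappa) \cindist_{O(\rho'), \I''} (U_{p^{\Ac}}, U_{F_i^{\Po,\kappa}}, T^\kappa)_{(p^{\Ac},\cdot)\la\hDcr(T^\kappa)},\]
whose right-hand side is, conditioned on $T^\kappa$, a product distribution with second factor Bernoulli$(F_i^{\Po,\kappa})$, and with $F_i^{\Po,\kappa}$ fully determined by $T^\kappa$.

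The conclusion now follows by a direct expectation computation. Independence and measurability of $F_i^{\Po,\kappa}$ \wrt $T^\kappa$ give $\ex{U_{p^{\Ac}}\cdot U_{F_i^{\Po,\kappa}}\mid T^\kappa} = \ex{U_{p^{\Ac}}\mid T^\kappa}\cdot F_i^{\Po,\kappa}$, so the unconditional expectation on the right equals $\ex{U_{p^{\Ac}}\cdot F_i^{\Po,\kappa}}$. Since $E_{i+1}^{\Pc,\kappa}\cdot Z_i^{\Po,\kappa}$ is a bit-valued, hence bounded, statistic of the left-hand triple, the above computational indistinguishability transfers to an $O(\rho')$ bound on the difference of expectations, and the averaging over the uniform $i \la (r-1)$ inside $\hPi$ costs at most a factor of $r$, yielding $\ex{E_{i+1}^{\Pc,\kappa}\cdot(Z_i^{\Po,\kappa}-F_i^{\Po,\kappa})}\in \pm O(\rho'r)$ for every $i\in(r-1)$. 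Choosing $\rho'$ to be a small enough constant multiple of $\rho$ produces the stated $\pm 4r\rho$ bound. The main obstacle I expect is the hybrid argument equating $\hDcr$'s and $\Fc$'s marginals while tracking the multiple sources of error tightly enough to land at exactly the $4r\rho$ bound; some additional care is needed to ensure that $\hPi$ is a bona fide no-input single-bit-output \ppt protocol to which the dichotomy theorem applies.
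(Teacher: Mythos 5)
Your proposal matches the paper's proof in all essentials: you build the same auxiliary protocol (uniform $i$, run $\pi$ for $i$ rounds, one party outputs $E_{i+1}$ internally while the other outputs its backup value), invoke the same dichotomy theorem of \cite{HNOSS18New}, and then argue---exactly as the paper does in its Claim about $\ex{|W_I^{\Bc}-F_I^{\Bc}|}$---that the decorrelator's second marginal must agree with the forecaster's, since two calibrated estimators that disagree noticeably would let one distinguish the other from the real backup value. The only cosmetic difference is the final accounting: the paper organizes the conclusion as a three-term telescoping decomposition ($\ex{E\cdot Z - W^{\Ac}W^{\Bc}}$, $\ex{W^{\Ac}W^{\Bc} - W^{\Ac}F^{\Bc}}$, $\ex{W^{\Ac}F^{\Bc} - E\cdot F^{\Bc}}$), each bounded by a separate distinguisher, whereas you compress the first and third terms into a single ``bounded-statistic transfers across indistinguishability'' step; when writing this up you should be explicit that the relevant bounded statistic is $E_{i+1}\cdot(Z_i-F_i^{\Po})$ (not just $E_{i+1}\cdot Z_i$) so that the right-hand side evaluates to exactly zero by the conditional product structure, and that restricting the distinguisher to a fixed value of $i$ is what produces the extra factor of $r$.
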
 
We prove for $\Pc = \Ac$. Consider the following  variant of $\pi$ in which the party playing  $\Ac$ is outputting $E^\Ac_i$ and the party playing $\Bc$ is outputting its backup value.

\begin{protocol}[$\hPi = \tpprot{\hAc}{\hBc}$]\label{prot:hpi} 
	\item Common input: security parameter $1^\pk$.
	
	\item Description:
	\begin{enumerate}
		
		\item Party $\hAc$ samples $i\la [r]$ and $\rdm\la S^\pk$, and sends them to  $\hBc$.

		\item  The parties interact in  the first $i-1$ rounds of a random execution of $\pi(1^\kappa)$, with $\hAc$ and $\hBc$ taking the role of $\Ac$  and $\Bc$ respectively.

		Let $m_1,\ldots,m_{i-1}$ be the messages, and let $z_{i-1}^\Bc$ be   the $(i-1)$ backup output of $\Bc$ in the above execution.

		\item  $\hAc$  sets  the value of $e^\Ac_i$ as follows:

		If $\Ac$ sends the $i-1$ message above,  then it sets $e^\Ac_i =0$.
		
		Otherwise, it
		\begin{enumerate}
			\item Continues the above execution of $\pi$ to compute  its next message $m_i$. 
			
			\item Computes $f_{i} =\Fc(m_{\le i}, \rdm)$.

			\item Let $e^\Ac_i  = \Cc(f_{\le i},\rdm)$.
		\end{enumerate}  

		\item  $\hAc$ outputs $e^\Ac_i$ and  $\Bc$ outputs $z_{i-1}^\Bc$.

	\end{enumerate}

\end{protocol}

We apply the  the following dichotomy  result of   \citet{HNOSS18New} on the above protocol.
\begin{theorem}[\citet{HNOSS18New}, Thm. 3.18,  dichotomy of two-party protocols]\label{thm:Decor}
	Let  $\Delta$ be an efficient single-bit output   two-party protocol. Assume io-key-agreement protocol do not exist,  then for any constant $\rho>0$ and infinite subset $\I \subseteq \N$, there exists a \ppt algorithm  \Dcr (decorelator) mapping  transcripts of  $\Delta$ into (the binary description of) pairs in $[0,1] \times [0,1]$ and an infinite set $\I' \in \N$, such that the following holds:  let $C^{\Ac,\pk}$, $C^{\Bc,\pk}$ and $T^\pk$ denote the parties' output and protocol transcript in a random execution of $\Delta(1^\pk)$. Let $m(\kappa)\in \poly$ be a bound on the number of coins used by $\Dcr$ on  transcripts in $\supp(T^\pk)$, and let   $S^\pk$ be a uniform string of length $m(\kappa)$. Then
	
	$$(C^{\Ac,\pk},C^{\Bc,\pk},T^\pk,S^\pk)  \cindist_{\rho,\I'} (U_{p^{\Ac}},U_{p^{\Ac}},T^\pk,S^\pk)_{(p^{\Ac},p^{\Bc}) = \Dcr(T^\pk;S^\pk)}$$
	letting $U_p$ be a Boolean random variable taking the value $1$ with probability $p$.
\end{theorem}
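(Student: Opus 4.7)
The plan is to invoke the dichotomy theorem (\cref{thm:Decor}) on the auxiliary protocol $\hPi$ introduced just before the claim and combine the resulting decorrelator with the forecaster guarantee of \cref{claim:goodForcaster}. Without loss of generality, assume $\Pc = \Ac$ and fix $i \in (r-1)$; the proof for $\Pc = \Bc$ is symmetric. I would apply \cref{thm:Decor} to $\hPi$ with a sufficiently small constant parameter $\rho'$ (to be chosen relative to $\rho$) and the given infinite set $\I'$, obtaining a decorrelator $\hDcr$ and an infinite subset $\I'' \subseteq \I'$. As in the footnote to \cref{thm:Forcasters}, one may assume WLOG that $\hDcr$ has constant output length (by rounding its outputs to $\ceil{\log(1/\rho')}+1$ bits). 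Since the transcript of $\hPi$ includes both the round $j$ sampled by $\hAc$ and the randomness $S^\pk$ fed to $\Fc$, conditioning the dichotomy on the probability-$1/r$ event $\{j = i+1\}$ inflates the distinguishing advantage by a factor of $r$, yielding for every $\kappa \in \I''$ that
\[
(E_{i+1}^{\Ac,\pk},\, Z_i^{\Bc,\pk},\, M^\pk_{\le i},\, S^\pk) \cindist_{r\rho', \I''} (U_{p^\Ac},\, U_{p^\Bc},\, M^\pk_{\le i},\, S^\pk),
\]
where $(p^\Ac, p^\Bc)$ denotes $\hDcr$ evaluated on the corresponding $\hPi$-transcript (with $\hDcr$'s own randomness quantified away).

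I would then split the quantity of interest as $\ex{E_{i+1}^\Ac(Z_i^\Bc - F_i^\Bc)} = \ex{E_{i+1}^\Ac Z_i^\Bc} - \ex{E_{i+1}^\Ac F_i^\Bc}$. For the first term, since $E_{i+1}^\Ac, Z_i^\Bc \in \zo$, a distinguisher outputting their AND gives $\ex{E_{i+1}^\Ac Z_i^\Bc} \in \ex{U_{p^\Ac} U_{p^\Bc}} \pm r\rho'$; and since $U_{p^\Ac}, U_{p^\Bc}$ are conditionally independent given $(M^\pk_{\le i}, S^\pk)$ and $\hDcr$'s randomness, this equals $\ex{p^\Ac p^\Bc}$. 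For the second term, $F_i^\Bc$ is a deterministic function of $(M^\pk_{\le i}, S^\pk)$ taking only constantly many values (as $\Fc$ has constant output length), so I would enumerate these values $v$, use a Boolean distinguisher that tests whether $F_i^\Bc = v$ and the first coordinate equals $1$, and sum with weights $v$ to get $\ex{E_{i+1}^\Ac F_i^\Bc} \in \ex{p^\Ac F_i^\Bc} \pm O(r\rho')$. Combining gives $\ex{E_{i+1}^\Ac(Z_i^\Bc - F_i^\Bc)} \in \ex{p^\Ac(p^\Bc - F_i^\Bc)} \pm O(r\rho')$.

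The final step is to show $\ex{p^\Ac(p^\Bc - F_i^\Bc)} \in \pm O(r\rho')$. The key observation is that $p^\Bc$ and $F_i^\Bc$ are both efficient forecasters of $Z_i^\Bc$ given $(M^\pk_{\le i}, S^\pk)$: $p^\Bc$ arises from the $\Bc$-marginal of the dichotomy (after conditioning on $\{j=i+1\}$, which already incurs the factor $r$), while $F_i^\Bc$ is guaranteed by \cref{claim:goodForcaster}. Hence by the triangle inequality
\[
(U_{p^\Bc},\, M^\pk_{\le i},\, S^\pk) \cindist_{O(r\rho'), \I''} (U_{F_i^\Bc},\, M^\pk_{\le i},\, S^\pk).
\]
Enumerating the constantly many values of $p^\Ac$ (using that the rounded $\hDcr$ has constant output length) and applying a Boolean distinguisher per value, I would conclude $\ex{p^\Ac p^\Bc} \in \ex{p^\Ac F_i^\Bc} \pm O(r\rho')$. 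Choosing $\rho'$ small enough so that all accumulated $O(r\rho')$ slacks sum to at most $4r\rho$ completes the proof.

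The main obstacle is bookkeeping: one must carefully track the constant factors arising from enumerating the constant-sized supports of $F_i^\Pc$ and $p^\Pc$, together with the factor of $r$ incurred by conditioning the dichotomy on the event that the random round in $\hPi$ equals $i+1$. A subtle but crucial point is that $p^\Ac, p^\Bc$, and $F_i^\Bc$ are probabilities in $[0,1]$ rather than bits, so indistinguishability cannot be applied against a real-valued "multiply by $p^\Ac$" distinguisher directly; the enumerate-and-sum trick, which crucially exploits the constant output lengths of $\Fc$ and (after rounding) of $\hDcr$, is what reduces everything to finitely many Boolean distinguishers.
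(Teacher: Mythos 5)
The statement you were asked to prove is \cref{thm:Decor}, the dichotomy theorem of \citet{HNOSS18New}. This is an external result (Theorem 3.18 of that work) which the present paper imports as a black box and never proves; establishing it requires the machinery of \cite{HNOSS18New} and cannot be derived from anything in this paper. Your proposal does not prove it either: you \emph{invoke} \cref{thm:Decor} on the auxiliary protocol $\hPi$ in your very first sentence, so what you have actually written is a proof of the downstream statement \cref{claim:AttackOpurunity} (independence of the attack decision), not of the theorem itself. That is the central gap: the object you were asked to establish appears as a hypothesis of your argument.

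Read instead as a proof of \cref{claim:AttackOpurunity}, your architecture matches the paper's (define $\hPi$, decorrelate it, compare the decorrelator's $\Bc$-marginal with the forecaster), but one step fails quantitatively. Your ``enumerate-and-sum'' reductions bound the error by (size of the support) times (distinguishing advantage). The supports of $F_i^{\Bc}$ and of the rounded decorrelator output have size $\Theta(1/\rho')$ --- constant in $\kappa$ but \emph{not} in $\rho'$ --- so summing per-value Boolean distinguishers gives an error of order $(1/\rho')\cdot r\rho' = r$, which is vacuous, and shrinking $\rho'$ does not help because the support grows correspondingly. The paper avoids this with single randomized distinguishers: to compare $\ex{E_{I+1}^{\Ac,\pk}\cdot F_I^{\Bc,\pk}}$ with its decorrelated counterpart it samples $u$ from $U_{f^{\Bc}}$ and outputs $z^{\Ac}\cdot u$ (\cref{claim:AttackOpurunityRes3}), and to control $\ex{\size{W_I^{\Bc,\pk}-F_I^{\Bc,\pk}}}$ it uses a comparison distinguisher whose advantage \emph{equals} that L1 distance, followed by an averaging argument against either $\Fc$ or $\Dcr$ (\cref{claim:AttackOpurunityRes2}). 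A further point: keep the decorrelator's coins $\Sh^\pk$ inside the indistinguishable tuples rather than ``quantifying them away,'' since your final step needs a distinguisher that recomputes $p^{\Ac}$ from the same coins that produced $p^{\Bc}$, which is impossible once those coins are removed from the tuple.
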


\begin{proof}[Proof of \cref{claim:AttackOpurunityRes}]   
	Assume io-key-agreement does not exits, and let  $\I''\subseteq \I'$ and a \ppt $\Dcr$ be the infinite set and \ppt decorrelator resulting by applying \cref{thm:Decor} \wrt protocol  $\hPi$ and  $\rho$. Let $\Sh^\pk$ denote a long enough uniform string to be used by $\Dcr$ on transcripts of $\hPi(1^\pk)$. Then for $I \la  (r-1)$, it holds that 
	\begin{align}\label{eq:Decor}
	( E_{I+1}^{\Ac,\pk},Z_{I}^{\Bc,\pk},M^{\pk}_{\le i},S^\pk,\Sh^\kappa)   \cindist_{\rho,\I''}  (U_{p^{\Ac}},U_{p^{\Bc}},M^\pk_{\le I}, S^\pk,\Sh^\kappa)_{(p^{\Ac},p^{\Bc}) = \Dcr(M_{\le I},S^\pk;\Sh^\pk)}
	\end{align}
	letting $\Dcr(m_{\le i},s;\widehat{s}) = \Dcr(i,s,m_{\le i};\widehat{s})$.

	For $i\in [r]$, let  $W_i^\pk = (W_i^{\Ac,\pk},W_i^{\Bc,\pk}) = \Dcr(M_{\le i},S^\pk;\Sh^\pk)$. The proof of  \cref{claim:AttackOpurunityRes1}  follows by the following three observations, proven below,  that hold for  large enough  $\pk \in \I ''$.  
	
	\begin{claim}\label{claim:AttackOpurunityRes1}
		$\ex{E_{I+1}^{\Ac,\pk}\cdot Z_{I}^{\Bc,\pk} - W_{I}^{\Ac,\pk}\cdot W_{I}^{\Bc,\pk}}\in \pm \rho$.
	\end{claim}

	\begin{claim} \label{claim:AttackOpurunityRes3}
		$\ex{W_{I}^{\Ac,\pk}\cdot F_{I}^{\Bc,\pk} - E_{I+1}^{\Ac,\pk}\cdot F_{I}^{\Bc,\pk}}\in \pm \rho$.
	\end{claim}
	
	\begin{claim} \label{claim:AttackOpurunityRes2}
		$\ex{W_{I}^{\Ac,\pk}\cdot W_{I}^{\Bc,\pk} - W_{I}^{\Ac,\pk}\cdot F_{I}^{\Bc,\pk}}\in \pm 2\rho$.
	\end{claim}
We conclude that $\ex{  E_{I+1}^{\Pc,\pk} \cdot Z_{I}^{\Po,\pk} - E_{I+1}^{\Pc,\pk}\cdot F^{\Po,\pk}_{I} } \in \pm 4\rho$, and thus $\ex{  E_{i+1}^{\Pc,\pk} \cdot Z_{i}^{\Po,\pk} - E_{i+1}^{\Pc,\pk}\cdot F^{\Po,\pk}_{  i} } \in \pm 4r\rho$ for every $i\in (r-1)$. 
\end{proof}

\paragraph{Proving \cref{claim:AttackOpurunityRes1}.}
\begin{proof}[Proof of \cref{claim:AttackOpurunityRes1}]   
	Consider algorithm $\Dc$ that on input $(z^\Ac,z^\Bc, \cdot)$, outputs $z^\Ac z^\Bc$. By definition,
	\begin{enumerate}
		\item $\pr{\Dc(U_{W_{I}^{\Ac,\pk}} ,U_{W_{I}^{\Bc,\pk} } , M^\pk_{\le I},S^\pk) =1} = \ex{U_{W_{I}^{\Ac,\pk}} \cdot U_{W_{I}^{\Bc,\pk}}} = \ex{W_{I}^{\Ac,\pk} \cdot W_{I}^{\Bc,\pk}}$, and 
		
		\item $\pr{\Dc(E_{I+1}^{\Ac,\pk}, Z_{I}^{\Bc,\pk} , M^\pk_{\le I},S^\pk)=1} = \ex{E_{I+1}^{\Ac,\pk}\cdot Z_{I}^{\Bc,\pk}}$.
		
	\end{enumerate}
	Hence, the proof  follows by   \cref{eq:Decor}.
\end{proof}

\paragraph{Proving \cref{claim:AttackOpurunityRes3}.}
\begin{proof}[Proof of \cref{claim:AttackOpurunityRes3}]   
	Consider the algorithm $\Dc$ that on input $(z^\Ac,z^\Bc, ( m_{\le I},\rdm))$: (1) computes $(\cdot, f^\Bc)=\Fc(m_{\le I};\rdm)$, (2) samples $u\la U_{f^\Bc}$, (3) outputs $z^\Ac\cdot u$.  By definition,
	\begin{enumerate}
		\item $\pr{\Dc(U_{W_{I}^{\Ac,\pk}} ,U_{W_{I}^{\Bc,\pk} } , M^\pk_{\le I},S^\pk) =1} = \ex{U_{W_{I}^{\Ac,\pk}}\cdot U_{F_{I}^{\Bc,\pk}}} = \ex{W_{I}^{\Ac,\pk} \cdot F_{I}^{\Bc,\pk}} $, and 
		
		\item $\pr{\Dc(E_{I+1}^{\Ac,\pk}, Z_{ I}^{\Bc,\pk} , M^\pk_{\le I},S^\pk)=1} = \ex{E_{I+1}^{\Ac,\pk}\cdot U_{F_{I}^{\Bc,\pk}}} = \ex{E_{I+1}^{\Ac,\pk}\cdot F_{I}^{\Bc,\pk}}$.
		
	\end{enumerate}
	Hence, also in this case the proof  follows by   \cref{eq:Decor}.
\end{proof}

\paragraph{Proving \cref{claim:AttackOpurunityRes2}.}
\begin{proof}[Proof of \cref{claim:AttackOpurunityRes2}]   
	
	Since $\size{W_{I}^{\Ac,\pk}}\le 1$, it suffices to prove $\ex{ \size{W_{I}^{\Bc,\pk} - F_I^{\Bc,\pk}}}\le 2\rho$. We show that if $\ex{ \size{W_I^{\Bc,\pk} - F_I^{\Bc,\pk}}}> 2\rho$, then there exists a distinguisher with advantage greater than $\rho$ for either the real outputs of $\hPi$ and the emulated outputs of $\Dcr$, or, the real outputs of $\tPi$ and the emulated outputs of $\Fc$, in contradiction with the assumed properties of $\Dcr$ and $\Fc$.

	Consider  algorithm \Dc that  on input $(z^\Ac,z^\Bc, m_{\le i},\rdm)$ acts as follows: (1) samples $\wh{\rdm}\la \wh{S}^\pk$,  (2) computes $(\cdot, f^\Bc)=\Fc(m_{\le i};\rdm)$ and $(\cdot, w^\Bc)=\Dcr(m_{\le i}, \rdm;\wh{s})$, (3) outputs $z^\Bc$ if $w^\Bc\ge f^\Bc$, and $1-z^\Bc$ otherwise.  We compute the difference in probability that $\Dc$ outputs $1$ given a sample from $\Dcr(M^\pk_{\le I})$ or a sample from $\Fc(M^\pk_{\le I})$   (we omit the superscript $\kappa$ and subscript $I$ below to reduce clutter)
	\begin{align*}
	\lefteqn{\pr{\Dc(U_{W_I^{\Ac,\pk}},U_{W_I^{\Bc,\pk}} , M^\pk_{\le I},S^\pk)=1 } - \pr{\Dc(U_{F_I^{\Ac,\pk}},U_{F_I^{\Bc,\pk}} , M^\pk_{\le I},S^\pk)=1 } }\\
	&  = \ex{U_{W  ^{\Bc}}\mid W  ^{\Bc}\ge F  ^{\Bc}}\cdot\pr{W  ^{\Bc}\ge F  ^{\Bc}} + \ex{1-U_{W  ^{\Bc}}\mid W  ^{\Bc}< F  ^{\Bc}}\cdot\pr{W  ^{\Bc}< F  ^{\Bc}} \\
	& \qquad- \ex{U_{F  ^{\Bc}}\mid W  ^{\Bc}\ge F  ^{\Bc}}\cdot\pr{ W  ^{\Bc}\ge F  ^{\Bc}} - \ex{1-U_{F  ^{\Bc}}\mid W  ^{\Bc}< F  ^{\Bc}}\cdot\pr{  W  ^{\Bc}< F  ^{\Bc}}\\
	&  = \ex{ W  ^{\Bc} \mid W  ^{\Bc}\ge F  ^{\Bc}}\cdot\pr{W  ^{\Bc}\ge F  ^{\Bc}} - \ex{ W  ^{\Bc} \mid W  ^{\Bc}< F  ^{\Bc}}\pr{ W  ^{\Bc}< F  ^{\Bc}} \\
	& \qquad- \ex{ F  ^{\Bc} \mid W  ^{\Bc}\ge F  ^{\Bc}}\cdot\pr{W  ^{\Bc}\ge F  ^{\Bc}} +\ex{  F  ^{\Bc} \mid W  ^{\Bc}< F  ^{\Bc}}\cdot \pr{ W  ^{\Bc}< F  ^{\Bc}}
	\\
	&  = \ex{ W  ^{\Bc}-  F  ^{\Bc} \mid W  ^{\Bc}\ge F  ^{\Bc}}\cdot\pr{W  ^{\Bc}\ge F  ^{\Bc}}   + \ex{ -W  ^{\Bc} + F  ^{\Bc} \mid W  ^{\Bc}< F  ^{\Bc}}\pr{ W  ^{\Bc}< F  ^{\Bc}} \\
	&= \ex{\size{W  ^{\Bc} - F  ^{\Bc}}} \\
	&> 2\rho.
	\end{align*}
	An averaging argument yields that  either $\Dc$  is a  distinguisher for $(U_{F_I^{\Ac,\pk}},U_{F_I^{\Bc,\pk}} , M^\pk_{\le I},S^\pk)$  and $(Z_I^{\Ac,\pk} ,Z_I^{\Bc,\pk}  , M^\pk_{\le I},S^\pk)$ with advantage greater than $\rho$, in contradiction with \cref{claim:goodForcaster}, or, $\Dc$  is a distinguisher for $(U_{W_I^{\Ac,\pk}},U_{W_I^{\Bc,\pk}} , M^\pk_{\le I},S^\pk)$  and $(E_I^{\Ac,\pk},Z^{\Bc,\pk}_I  , M^\pk_{\le I},S^\pk)$ with advantage greater than $\rho$, in contradiction with \cref{eq:Decor}. 
\end{proof}

\bibliographystyle{abbrvnat}
\bibliography{crypto}
\end{document}